\newtheorem{theorem}{Theorem}
\newtheorem{prop}[theorem]{Proposition}
\newtheorem{question}[theorem]{Question}
\theoremstyle{definition}
\newtheorem{definition}[theorem]{Definition}
\newtheorem{example}[theorem]{Example}
\newif\iffigs\figstrue
\DeclareFontFamily{U}{rsf}{}
\DeclareFontShape{U}{rsf}{m}{n}{
  <5> <6> rsfs5 <7> <8> <9> rsfs7 <10-> rsfs10}{}
\DeclareMathAlphabet\Scr{U}{rsf}{m}{n}
\def\a{{\mathsf a}}
\def\A{{\mathbf A}}
\def\B{{\mathbf B}}
\def\O{\Scr{O}}
\def\C{{\mathbb C}}
\def\P{{\mathbb P}}
\def\Q{{\mathbb Q}}
\def\R{{\mathbb R}}
\def\Z{{\mathbb Z}}
\def\H{{\mathbb H}}
\def\Hom{\operatorname{Hom}}
\def\Ext{\operatorname{Ext}}
\def\Aut{\operatorname{Aut}}
\def\coker{\operatorname{coker}}
\def\Spec{\operatorname{Spec}}
\def\SProj{\operatorname{\P roj}}
\def\Conv{\operatorname{Conv}}
\def\Pic{\operatorname{Pic}}
\def\Jac{\operatorname{Jac}}
\def\SU{\operatorname{SU}}
\def\rank{\operatorname{rank}}
\def\Cone{\operatorname{Cone}}
\def\id{{\mathbf{1}}}
\def\frak{\mathfrak}
\newcommand{\Moverbar}[1]{\mkern
  7mu\overline{\mkern-7mu#1\mkern+2.0mu}\mkern-2mu}
\def\CY{Calabi--Yau}
\def\LG{Landau--Ginzburg}
\def\GLSM{gauged linear $\sigma$-model}
\def\cM{{\Scr M}}
\def\cMcp{\Moverbar{\cM}}
\def\cA{{\Scr A}}
\def\cB{{\Scr B}}
\def\cI{{\Scr I}}
\def\DC{\mathbf{D}^b}
\def\cz{Z}
\def\RHom{\mathbf{R}\Hom}
\def\eqn#1#2{\begin{equation}#2
  \ifx{#1}{}\else\label{#1}\fi\end{equation}}
\def\fluffy{$\Q$-massless}
\begin{document}

\begin{titlepage}
\begin{flushright}
May 2013
\end{flushright}
\vspace{.5cm}
\begin{center}
\baselineskip=16pt
{\fontfamily{ptm}\selectfont\bfseries\huge
Categories of Massless D-Branes\\and del Pezzo Surfaces\\[20mm]}
{\bf\large  Nicolas Addington and Paul S.~Aspinwall
 } \\[7mm]

{\small

Department of Mathematics, 
  Box 90320 \\ Duke University, 
 Durham, NC 27708-0320 \\ \vspace{6pt}

 }

\end{center}

\begin{center}
{\bf Abstract}
\end{center}
In analogy with the physical concept of a massless D-brane, we define
a notion of ``\fluffy ness'' for objects in the derived category. This is
defined in terms of monodromy around singularities in the stringy K\"ahler
moduli space and is relatively easy to study using ``spherical functors''.
We consider several examples in which del Pezzo surfaces and
other rational surfaces in \CY\ threefolds are contracted. For precisely
the del Pezzo surfaces that can be written as hypersurfaces in weighted
$\P^3$, the category of \fluffy\ objects is a ``fractional \CY'' category
of graded matrix factorizations.


\end{titlepage}

\vfil\break

\def\Cl{{L^\vee}}

\section{Introduction}    \label{s:intro}

The D-branes in the topological B-model on a \CY\ threefold $X$ are
expected to be described by the derived category $\DC(X)$
\cite{Doug:DC,AL:DC,me:TASI-D}. For a physical D-brane associated to a
boundary condition in a conformal field theory one further imposes a
stability condition \cite{DFR:stab,Doug:DC} which varies as one
moves in the stringy moduli space of complexified K\"ahler forms, denoted $\cM$. Alternatively one may consider a space of stability conditions directly \cite{Brg:stab} and then view $\cM$ a subspace of this. 

If $\cMcp$ is a compactification of $\cM$, we can think of the
points in $\cMcp-\cM$ as ``bad conformal field theories''. There are
two reasons they can be bad:
\begin{enumerate}
\item They are ``at infinity'' with respect to the Zamolodchikov
  metric -- that is, they are ``large radius limits'' in some sense.
\item At least one D-brane has become massless \cite{Str:con} in a sense we now explain.
\end{enumerate}

A choice of complexified K\"ahler form gives a map $\cz:K(X)\to\C$ given by the central charge, which is key in determining stability conditions. The mass of a D-brane is proportional to $|\cz|$. Thus a massless D-brane is a \emph{stable} object for which $\cz\to0$ as we approach a point in $\cMcp-\cM$. Determining the set of stable D-branes is a notoriously difficult problem, so we propose an alternative approach to finding massless D-branes using monodromy.

Going around loops in $\cM$ induces automorphisms of $\DC(X)$ \cite{AD:Dstab}. If $y$ is a local coordinate on $\cM$ and $T \in \Aut(\DC(X))$ is the monodromy around a bad theory $y=0$, we define an object $\a \in \DC(X)$ to be ``\fluffy'' if
\[ T^q\a = \a[p] \]
for some for some positive integers $p$ and $q$.\footnote{We will also require that this not be true of \emph{every} object in $\DC(X)$, to avoid the degenerate case of a \LG\ orbifold theory.}  As we discuss later, this suggests that the central charge of $\a$ goes to zero as $y^{p/2q}$. In addition, \fluffy ness seems to capture some of the information about stability as $y\to0$.

The phase picture of the moduli space coming from the \GLSM\ \cite{W:phase} naturally categorizes monodromy as arising around ``phase limits'' or around the ``discriminant'' which asymptotically lives in walls dividing phases. Horja's EZ-spherical twists \cite{Horj:EZ} give a nice picture of the monodromy around the discriminant, but in this paper we will focus on the phase limits. In the simplest examples, the only massless D-branes to be seen occur at the discriminant, but we will see that in more general examples the phase limits can be far more interesting.

The main case we consider is that of a contracting del Pezzo surface $E \subset X$. Here the contracted threefold itself forms a phase limit although, as we discuss, the phase limit is better thought of as an ``exoflop''. The monodromy of the periods has already been carefully analyzed in this case \cite{Lerche:1996ni}. That is, the action of the monodromy on K-theory was computed using the GKZ differential equations. Here we study the monodromy action on the derived category itself using ``spherical functors'' \cite{MR2258045,Anno:sph}, which provide a very simple way of working with monodromy --- in many ways easier than solving the GKZ system.

Let $dP_n$ be $\P^2$ blown up at $n$ points in general position. This is a del Pezzo surface for $n\leq8$.\footnote{We will abuse notation somewhat and sometimes have $n>8$.} We summarize the properties we find when contracting a surface $E \subset X$:
\begin{enumerate}
\item If $E$ is $\P^2$, $\P^1\times\P^1$, $dP_6$, $dP_7$ or $dP_8$, then the category of \fluffy\ objects as $E$ collapses to a point is given by the subcategory $\A \subset \DC(E)$ appearing in the semiorthogonal decomposition
\begin{equation}
  \DC(E) = \langle \A,\O_E,\O_E(1),\ldots,\O_E(m-1)\rangle,
\label{eq:orth1}
\end{equation}
where $m$ is the index of $E$.  For $\P^2$ we have $m=3$ and $\A = 0$.  For $\P^1 \times \P^1$ we have $m=2$ and $\A = \DC(\text{2 points})$.  For the others we have $m=1$ and $\A$ is a category of graded matrix factorizations \cite{Orlov:mfc}, which is ``fractional \CY.''
\item If $E = dP_5$ then the category of \fluffy\ objects as $E$ collapses is a subcategory of $\A$ whose image in $K(\A)$ is codimension 1.  The ``other'' objects in $\A$ have monodromy consistent with their central charges going to zero as $\sqrt y \log y$, so all of $\A$ appears to go massless.
\item If $E\to\P^1$ is a conic fibration, with generic fibre $\P^1$ and $n$ degenerate fibres (of the form $\P^1\vee\P^1$), then as $E \subset X$ is collapses to $\P^1$ we find $2n$ \fluffy\ objects.
\item If $E = dP_n$ and $E \to \P^2$ is the blow-down of $n$ $(-1)$-curves, then as $E \subset X$ collapses to $\P^2$ we find $n$ \fluffy\ objects.  This is just the usual flop picture in which the $n$ curves collapse to $n$ points, but recast into our framework of spherical functors and semi-orthogonal decompositions.
\end{enumerate}

All our examples are complete intersections in toric varieties, so the geometry of $\cM$ is well-understood.  In section \ref{s:mon} we review enough of the toric machinery to be able to write our monodromy as $T_\infty = T_\Delta T_0$, where $T_\Delta$ is a Seidel-Thomas spherical twist or Horja EZ-spherical twist and $T_0$ is tensoring by a line bundle.  In section \ref{s:sphF} we review spherical functors and their interaction with semi-orthogonal decompositions, which will allow us work with this monodromy easily.  In section \ref{s:fluffy} we discuss \fluffy ness in detail, and review certain fractional \CY\ categories which will be our main source of \fluffy\ objects.  In section \ref{s:app} we turn to the examples mentioned above, which are the heart of the paper.  In section \ref{s:disc} we conclude with a discussion of unresolved questions.

\section{Monodromy} \label{s:mon}

\subsection{Mirror Pairs}  \label{ss:mir}

If a \CY\ threefold is given as a complete intersection in a normal
toric variety then the secondary fan may be used to construct a model
of $\cM$ \cite{Bat:m}. This construction is standard (see, for
example, \cite{AG:gmi}) but we need to review it to fix notation.

Let $N$ be a lattice of rank $d$ and $M$ the dual
lattice. Let $\cA \subset N$ be set of $n$ points such that
\begin{enumerate}
\item $\cA$ spans $N$.
\item $\cA = N \cap \Conv(\cA)$, that is, $\cA$ contains all the lattice points of its convex hull.
\item There is a $\mu \in M$ such that $\langle \mu,\alpha \rangle = 1$ for all $\alpha \in \cA$; in particular $\cA$ lies in an affine hyperplane.
\end{enumerate}
Let $C_\cA \subset N_\R$ be the cone over $\Conv(\cA)$ and let $C_{\cB}$ be the dual cone:
\begin{equation*}
C_{\cB} = \{s\in M_\R: \langle s,t\rangle\geq0 \text{ for all } t\in C_{\cA}\}.
\end{equation*}
Then we require a finite set of points $\cB \subset M$ such that $C_\cB$ is the cone over $\Conv(\cB)$ and
\begin{enumerate}
\item $\cB$ spans $M$.
\item $\cB = M \cap \Conv(\cB)$.
\item There is a $\nu \in N$ such that $\langle \beta,\nu \rangle = 1$ for all $\beta \in \cB$.
\end{enumerate}
This is a generalization of $\cA$ and $\cB$ being a ``reflexive pair'' in that we do not require $\langle \mu, \nu \rangle = 1$.  Among other things this allows us to work with complete intersections in toric varieties rather than just hypersurfaces.

The points of $\cA$ determine a surjective map $A:\Z^{\oplus n}\to N$, which we think of as a $d \times n$ integer matrix.  Form an exact sequence
\begin{equation}
\xymatrix@1@M=2mm{
  0\ar[r]&L\ar[r]&\Z^{\oplus n} \ar[r]^A&N \ar[r]& 0,
} \label{eq:LZN}
\end{equation}
where $L$ is the ``lattice of relations'' of rank $r=n-d$. Dual to this we have
\begin{equation}
\xymatrix@1@M=2mm{
  0\ar[r]&M\ar[r]&\Z^{\oplus n}\ar[r]^Q&\Cl\ar[r]& 0,
} \label{eq:M-D}
\end{equation}
where $Q$ is the $r\times n$ matrix of ``charges'' of the points in
$\cA$.

Let
\begin{equation*}
  S = \C[x_1,\ldots,x_n].
\end{equation*}
The matrix $Q$ gives an $r$-fold multi-grading to this ring: that is, we have a $(\C^*)^r$ torus action
\begin{equation*}
  x_\alpha \mapsto \lambda_1^{Q_{1\alpha}}\lambda_2^{Q_{2\alpha}}\ldots
              \lambda_r^{Q_{r\alpha}} x_\alpha,
\end{equation*}
where $\lambda_j\in\C^*$. Let $S_0$ be the $(\C^*)^r$-invariant
subalgebra of $S$. The algebra $S$ then decomposes into a sum of
$S_0$-modules labeled by their $r$-fold grading:
\begin{equation*}
  S = \bigoplus_{\mathbf{u}\in \Cl} S_{\mathbf{u}},
\end{equation*}
where $\Cl\cong\Z^{\oplus r}$ from (\ref{eq:M-D}).
As usual we denote a shift in grading by parentheses: $S(\mathbf{u})_{\mathbf{v}}=S_{\mathbf{u}+\mathbf{v}}$.

Consider a simplicial decomposition $\Sigma$ of the pointset $\cA$ which is {\em regular} in the sense of \cite{OP:convex}.  It may or may not include points in the interior of the convex hull of $\cA$. We refer
to a choice of simplicial decomposition as a ``phase''. 

To each phase we associate the ``Cox ideal'', defined in \cite{Cox:} as follows:
\begin{definition}
  The {\bf Cox ideal} $B_\Sigma$ is the ideal of $S$ generated by the monomials
\[ \prod \{ x_\alpha : \alpha \in \cA \setminus \text{vertices of  } \sigma \} \]
as $\sigma$ ranges over the maximal faces of $\Sigma$.
\end{definition}

We define the stack
\begin{equation*}
\begin{split}
  Z_\Sigma &= \SProj_\Sigma S\\
          &= [(\Spec S - V(B_\Sigma)) / (\C^*)^r ],
\end{split}
\end{equation*}
where $V(B_\Sigma)$ is the set of homogeneous prime ideals containing $B_\Sigma$. This is almost the same as the usual toric
variety associated to $\Sigma$. We use the stack terminology to
correctly deal with sheaves on orbifold singularities. This toric
version of a stack is exactly the same construction as that for a
weighted projective space explained in \cite{ALO:} except that
$B_\Sigma$ plays the r\^ole of the irrelevant ideal and $r$ can be
greater than 1. See also \cite{BCS:toric}.

\def\gr{\operatorname{gr}}
\def\tors{\operatorname{tors}}
\begin{prop}  \label{prop:DZ}
The category of coherent sheaves on $Z_\Sigma$ is given by the quotient
category $\gr(S)/\tors_\Sigma(S)$, where $\gr(S)$ is the category of
finitely-generated multigraded $S$-modules and $\tors_\Sigma(S)$ is
the subcategory of such modules annihilated by some power of $B_\Sigma$. \end{prop}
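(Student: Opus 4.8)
This is the toric, stacky incarnation of Serre's theorem $\operatorname{Coh}(\P^n)\cong\gr(\C[x_0,\dots,x_n])/\tors$, and the plan is to follow the same route, replacing the usual irrelevant ideal by the Cox ideal $B_\Sigma$ and allowing the rank $r$ of the grading group to exceed $1$; this is exactly the argument used for weighted projective space in \cite{ALO:}, carried out equivariantly. The strategy is to factor the natural ``sheafification'' functor $\gr(S)\to\operatorname{Coh}(Z_\Sigma)$, $M\mapsto\tilde M$, through two steps: first identify equivariant sheaves on the affine $\Spec S$ with multigraded modules, and then realize $Z_\Sigma$ as the open substack whose coherent sheaves form a Serre quotient of those on $\Spec S$.

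First I would use the definition $Z_\Sigma=[(\Spec S - V(B_\Sigma))/(\C^*)^r]$ to write $\operatorname{Coh}(Z_\Sigma)=\operatorname{Coh}^{(\C^*)^r}(U)$, the category of $(\C^*)^r$-equivariant coherent sheaves on $U=\Spec S - V(B_\Sigma)$. This holds for any quotient stack with no hypothesis on the freeness of the action, which is precisely what allows the orbifold singularities to be handled correctly. On the affine cone $\Spec S$ the torus is linearly reductive, so a $(\C^*)^r$-equivariant quasicoherent sheaf is the same datum as an $S$-module graded by the character lattice $\Cl$; restricting to finitely generated modules gives an equivalence $\gr(S)\cong\operatorname{Coh}^{(\C^*)^r}(\Spec S)$ under which the grading shift $(\mathbf u)$ corresponds to a twist of the equivariant structure.

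Next I would analyse the restriction functor $j^{*}$ along the open immersion $j:U\hookrightarrow\Spec S$. The Gabriel--Serre localization theorem realizes $\operatorname{Coh}(U)$ as the quotient of $\operatorname{Coh}(\Spec S)$ by the Serre subcategory of sheaves supported on the complement $V(B_\Sigma)$, and the same statement holds equivariantly since every functor in sight commutes with the $(\C^*)^r$-action. It then remains to identify this subcategory, under $\gr(S)\cong\operatorname{Coh}^{(\C^*)^r}(\Spec S)$, with $\tors_\Sigma(S)$: for a finitely generated graded $M$ one has $\tilde M|_U=0$ exactly when $\supp M\subseteq V(B_\Sigma)$, i.e.\ when $B_\Sigma\subseteq\sqrt{\operatorname{Ann} M}$, and by finite generation this is equivalent to $B_\Sigma^{k}M=0$ for some $k$, which is the defining condition of $\tors_\Sigma(S)$. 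Composing the two steps yields the claimed equivalence $\gr(S)/\tors_\Sigma(S)\cong\operatorname{Coh}(Z_\Sigma)$.

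The routine verifications are that $\tors_\Sigma(S)$ is a Serre subcategory, immediate from the annihilator description, and that $M\mapsto\tilde M$ is exact and essentially surjective, since on a Noetherian scheme every coherent sheaf on $U$ extends to $\Spec S$ and hence arises from a finitely generated graded module. The main obstacle is the full faithfulness of the induced functor on the quotient: concretely, for finitely generated graded $M,N$ one must show that $\Hom_{Z_\Sigma}(\tilde M,\tilde N)$ is computed by the colimit of $\Hom_{\gr(S)}(M',N)$ over submodules $M'\subseteq M$ with $B_\Sigma$-torsion cokernel. This is the quantitative heart of the localization theorem and amounts to controlling global sections by the low-degree local cohomology modules $H^{0}_{B_\Sigma}$ and $H^{1}_{B_\Sigma}$, which measure the failure of $M$ to equal the sections of its own sheafification. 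The only point requiring care beyond the weighted-projective case is that $r>1$ leaves no single ample generator, so one works with the whole family of line bundles $\O(\mathbf u)$, $\mathbf u\in\Cl$; since $B_\Sigma$ still cuts out the unstable locus and $S$ is Noetherian, the argument goes through unchanged.
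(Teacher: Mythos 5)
Your proposal is correct and takes essentially the same route as the paper, whose entire proof is the single sentence ``this follows exactly copying proposition 2.3 of \cite{ALO:}.'' What you have written out --- identifying $\operatorname{Coh}(Z_\Sigma)$ with $(\C^*)^r$-equivariant sheaves on $\Spec S - V(B_\Sigma)$, equating equivariant sheaves on $\Spec S$ with $\Cl$-graded modules, and applying Gabriel--Serre localization with the torsion subcategory identified via $\supp M \subseteq V(B_\Sigma) \Leftrightarrow B_\Sigma^k M = 0$ --- is precisely the argument of that cited proposition, adapted to $r>1$ exactly as the paper intends.
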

\begin{proof}
This follows exactly copying proposition 2.3 of \cite{ALO:}.
\end{proof}

Next we define the {\em superpotential\/}
\begin{equation*}
  W_X = \sum_{\beta\in\cB} a_\beta\, x_1^{\langle\alpha_1,\beta\rangle}
   x_2^{\langle\alpha_2,\beta\rangle}\cdots
    x_n^{\langle\alpha_n,\beta\rangle},
\end{equation*}
where $\cA=\{\alpha_1,\ldots,\alpha_n\}$ and $a_\beta\in\C$. Note that
this is homogeneous of multi-degree 0.

Define the Jacobian ideal
\begin{equation*}
\Jac(W_X) = \left(\frac{\partial W}{\partial
        x_1},\ldots,\frac{\partial W}{\partial
        x_n}\right),
\end{equation*}
and the corresponding stack
\begin{equation*}
  X_{\Sigma} = \SProj_\Sigma \frac{S}{\Jac(W_X)}.
\end{equation*}
Note that $X_\Sigma$ depends on a choice of the coefficients $a_\beta$ in the
superpotential. As a set, $X_\Sigma$ consists of the critical points of $W_X$.

We may now exchange the r\^oles of $\cA$ and $\cB$ to obtain a 
mirror superpotential
\begin{equation}
  W_Y = \sum_{\alpha\in\cA} b_\alpha\, y_1^{\langle\alpha,\beta_1\rangle}
   y_2^{\langle\alpha,\beta_2\rangle}\cdots
    y_m^{\langle\alpha,\beta_m\rangle},  \label{eq:Wb}
\end{equation}
for $\cB=\{\beta_1,\ldots,\beta_m\}$ and $b_\alpha\in\C$. Fixing a
regular simplicial decomposition $\Upsilon$ of $\cB$ gives a corresponding $Y_\Upsilon$ associated to the Jacobian of $W_Y$.

The statement of mirror symmetry for the gauged linear sigma model
\cite{W:phase} is that $X_\Sigma$ and $Y_\Upsilon$ form a mirror
pair. For the purposes of this paper we can use this as the {\em
  definition\/} of a mirror pair.

The standard example is given by the quintic threefold. Here the
matrix $A^t$, whose rows give the pointset $\cA$, is
\begin{equation*}
  A^t=\begin{pmatrix}1&0&0&0&0\\1&1&0&0&0\\1&0&1&0&0\\
    1&0&0&1&0\\1&0&0&0&1\\1&-1&-1&-1&-1\end{pmatrix}
\end{equation*}
The kernel of the transpose of this matrix gives the singly-graded degrees $(-5,1,1,1,1,1)$.  We will give a more detailed example in section \ref{ss:qeg}.

\subsection{The compactified moduli space}  \label{ss:mod}

\def\pervec{\boldsymbol{\Pi}}

We now build a coarse moduli space $\cM$ for the complex structures
given by polynomial deformations of $Y_\Upsilon$. By mirror symmetry,
this is gives the toric part of stringy moduli space of complexified
K\"ahler forms on $X_\Sigma$. 

$\cM$ is obtained by varying the coefficients $b_\alpha$ in 
(\ref{eq:Wb}). There is a $(\C^*)^d$-action given by rescalings of the
homogeneous coordinates. Thus $\cM$ appears as an open set in a toric
variety. As is well known, the appropriate toric variety is that given
by the {\em secondary fan\/} of regular simplicial decompositions of the pointset
$\cA$ \cite{GKZ:book,AGM:II}. That is,
\begin{equation*}
  \cM \subset S_\cA,
\end{equation*}
is an open subset where $S_\cA$ is the toric variety given by the
secondary polytope of the pointset $\cA$. $S_\cA$ is a
compactification of $T_\Cl\cong(\C^*)^r$, where $\Cl$
appears in (\ref{eq:M-D}).  This gives a natural compactification of
$\cM$ and we put $\cMcp=S_\cA$.

There are two distinct contributions to the pointset $\cMcp-\cM$ of
``bad theories'':
\begin{enumerate}
\item Part of the toric compactification $\cMcp-T_\Cl$, that is, the union
  of the toric divisors. Note that not all the points in
  $\cMcp-T_\Cl$ are necessarily bad as we discuss further below.
\item The discriminant $\Delta_W$. This is the ``principal
  $A$-determinant'' $E_A$ of \cite[Ch.~10]{GKZ:book}.\footnote{We have
    an unfortunate conflict of terminology here. We will use the term
    ``discriminant'' to refer to GKZ's ``A-determinant'', and
    ``A-discriminant'' to refer to GKZ's ``A-discriminant''.}
Aside from the
  toric divisors, it is where $Y_\Upsilon$ becomes singular. This
  was shown in the \GLSM\ language in \cite{MP:inst}.
\end{enumerate}

In the example of the quintic, there are two regular simplicial decompositions of
$\cA$, the secondary fan is one-dimensional, and $\cMcp\cong\P^1$. An
affine coordinate on this space is given from (\ref{eq:Wb}) as
\begin{equation*}
  z = \frac{b_1b_2b_3b_4b_5}{b_0^5}.
\end{equation*}
The toric divisors are $z=0$, the large radius limit, and $z=\infty$,
the \LG\ point. The discriminant $\Delta$ is the ``conifold'' point
$z=-5^{-5}$. Note that the \LG\ theory is perfectly well-defined so
$z=\infty$ is in $\cM$, while the large radius limit $z=0$ is not.

Define $\cM_0$ as the moduli space with these ``good'' points on
toric divisors removed. That is,
\begin{equation*}
  \cM_0 = \cM \cap T_\Cl.
\end{equation*}
Thus, in the quintic case, $\cM_0$ is a $\P^1$ with 3 points removed.

The object of study in monodromy is $\pi_1(\cM_0)$. Typically this is
a very complicated group, but there are some natural elements thanks
to the results of GKZ \cite{GKZ:book} as we now describe.

Each maximal cone in the secondary fan (or vertex of the secondary
polytope) is associated with a simplicial decomposition $\Sigma$ of $\cA$
and a ``phase'' in the sense of \cite{W:phase}. It is also naturally
associated, in the usual toric geometric sense, with a point in
$\cMcp=S_\cA$. We call this point the ``limit point'', $P_\Sigma$, of
the phase. As is standard in the construction of toric varieties,
there is a neighborhood $U_\Sigma$ of $P_\Sigma$ isomorphic to an affine
toric variety associated to the cone given by $\Sigma$.

One of the key results of GKZ \cite{GKZ:book} is that the discriminant
$\Delta_W$ stays ``well away'' from the limit points $P_\Sigma$ in the
following sense. The Newton polytope of $\Delta_W$ is precisely the secondary
polytope. Furthermore, as we move towards a limit point $P_\Sigma$, a
particular monomial in $\Delta_W$ corresponding to this vertex of
the convex hull of the secondary polytope acquires a larger and
larger absolute value and dominates all other monomials. Thus there is
some contractible neighborhood $V_\Sigma \subset U_\Sigma$ of $P_\Sigma$
which does not intersect $\Delta_W$. Furthermore, removing the toric
divisors we have
\[ \pi_1(V_\Sigma\cap\cM_0) \cong \pi_1(T_\Cl) = T_\Cl \cong \Z^r. \]

Thus, associated to each limit point $P_\Sigma$ we have $r$ commuting
elements $\gamma_{\Sigma,1},\ldots\gamma_{\Sigma,r}$ of $\pi_1(\cM_0)$. Note
that the geometry of $\cM_0$ away from $V_\Sigma$ may induce further
relations between these elements.

\subsection{Perestroika}  \label{ss:peres}

An often-used trick in analyzing the secondary polytope is to restrict
attention to the one-dimensional edges, i.e., the codimension one walls of
the secondary fan. Let $\Sigma_+$ and $\Sigma_-$ be two simplicial decompositions of $\cA$ at
the vertices of the secondary fan joined by such an edge. Going between
such triangulations was called a ``perestroika'' in \cite{GKZ:peres},
or a ``modification'' in \cite{GKZ:book}.

A codimension one wall in the secondary fan lies in a hyperplane
dual to a primitive vector $\pervec\in L$, defined up to a sign which we will eventually fix. This gives a ``circuit'', i.e., a collection of points that are affinely dependent but any subset is affinely
independent. The image of $\pervec\in L$ under $Q^t$ in
(\ref{eq:LZN}) gives the affine relation between the points.

Torically, a codimension one wall separating two maximal cones
corresponds to a toric $\P^1$ in the toric variety. That is, a
perestroika is associated to a toric rational
curve $\Theta\subset\cMcp$.
\begin{prop}
  The discriminant intersects $\Theta$ at a single point. Since $L$ is the
  character lattice of $T_\Cl$, the vector $\pervec\in L$ determines
  an affine coordinate $z_{\pervec}$ on $\Theta$. If
  $Q^t\pervec=(q_1,\ldots,q_n)$, then the discriminant $\Delta_W$
  meets $\Theta$ at
\begin{equation}
z_{\pervec} = \prod_{i=1}^n (q_i)^{q_i}. \label{eq:zhit}
\end{equation}
\end{prop}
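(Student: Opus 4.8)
The plan is to reduce the statement to the $A$-discriminant of the circuit attached to the wall, and then to locate its unique zero by a critical-point calculation.

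First I would pin down the coordinate. Since $L$ is the character lattice of $T_\Cl$, the primitive vector $\pervec$ is a character of $T_\Cl$, realized on the superpotential coefficients as the torus-invariant monomial $z_{\pervec}=\prod_{i=1}^n b_i^{q_i}$, where $(q_1,\dots,q_n)=Q^t\pervec$. As noted before the statement, this is the affine relation of the circuit: $\sum_i q_i\alpha_i=0$, and since all $\alpha_i$ lie on the hyperplane $\langle\mu,\alpha_i\rangle=1$ we also get $\sum_i q_i=0$. This is precisely the character that is non-constant along the $1$-dimensional orbit attached to the codimension-one wall, so it restricts to an affine coordinate on $\Theta$, taking the values $0$ and $\infty$ at the two limit points $P_{\Sigma_\pm}$. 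The next reduction is that along $\Theta$ only the circuit coefficients $\{b_i:q_i\neq0\}$ survive: the edge of the secondary polytope dual to the wall records exactly the perestroika supported on the circuit, so by GKZ's factorization of the principal $A$-determinant \cite{GKZ:book} the one factor whose Newton polytope has an edge parallel to $\pervec$ is the $A$-discriminant of this circuit sub-configuration, a binomial. Thus $\Delta_W\cap\Theta$ is cut out by that binomial.

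The core is then to evaluate the circuit discriminant as the locus where $W_Y$, restricted to the circuit monomials $y^{\alpha}:=\prod_j y_j^{\langle\alpha,\beta_j\rangle}$, acquires a critical point in the torus. Writing the logarithmic derivatives $y_j\,\partial_{y_j}W_Y=\sum_\alpha\langle\alpha,\beta_j\rangle\,b_\alpha y^{\alpha}$ and using that the $\beta_j$ span $M$, the critical-point equations become $\sum_\alpha\alpha\,u_\alpha=0$ in $N_{\C}$, where $u_\alpha=b_\alpha y^{\alpha}(t)$. (The condition $\langle\mu,\alpha\rangle=1$ makes the Euler operator attached to $\mu$ return $W_Y$ itself, so $W_Y$ vanishes automatically at such a point and it is genuinely a singular point, as the discriminant requires.) Because the circuit has relation lattice $\Z\pervec$, the solutions are $u_\alpha=\lambda q_\alpha$ for a scalar $\lambda$.

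Finally I would impose realizability: the values $w_\alpha:=u_\alpha/b_\alpha$ arise as $y^{\alpha}(t)$ for an actual torus point $t$ exactly when they respect the one multiplicative relation $\prod_\alpha w_\alpha^{q_\alpha}=1$ coming from $\pervec\in L$. Substituting $w_\alpha=\lambda q_\alpha/b_\alpha$ and using $\sum_\alpha q_\alpha=0$ to cancel $\lambda^{\sum_\alpha q_\alpha}=1$, this collapses to $\prod_\alpha b_\alpha^{q_\alpha}=\prod_\alpha q_\alpha^{q_\alpha}$, that is $z_{\pervec}=\prod_{i}q_i^{q_i}$ (with the usual convention that $q_i^{q_i}\in\Q$ for $q_i<0$; note this is consistent with the sign ambiguity in $\pervec$, as flipping $\pervec$ inverts both sides). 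This fixes a single value of $z_{\pervec}$, and since the result recalled earlier guarantees that $\Delta_W$ avoids the limit points $P_{\Sigma_\pm}$, it is the unique intersection of $\Delta_W$ with $\Theta\cong\P^1$. The critical-point computation is routine; the genuine obstacle is the reduction step — justifying rigorously that the full principal $A$-determinant restricts along $\Theta$ to precisely the circuit's binomial, i.e.\ controlling the edge combinatorics of the secondary polytope so that no other factor of $E_\cA$ meets $\Theta$ and the single point occurs transversally.
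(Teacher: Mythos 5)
Your argument is correct, and its overall skeleton matches the paper's: both reduce along the edge of the secondary polytope to the $A$-discriminant of the circuit, and both then evaluate that discriminant to get $\prod_i q_i^{q_i}$. The difference is in how the second step is carried out. The paper simply cites GKZ's explicit description of circuit discriminants (Theorem 3.3 and equation (3.3) of \cite[Ch.~9]{GKZ:book}) to conclude that the circuit factor is a binomial vanishing at \eqref{eq:zhit}, whereas you re-derive that formula from scratch via the Horn-type critical-point computation: solving $\sum_\alpha \alpha\, u_\alpha = 0$ using the fact that the relation lattice of the circuit is $\Z\pervec$, invoking the Euler relation attached to $\mu$ to see that $W_Y$ vanishes automatically at a critical point, and then imposing the single multiplicative constraint $\prod_\alpha w_\alpha^{q_\alpha}=1$ that characterizes the image of the torus. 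That computation is correct (and correctly reproduces the sign conventions --- e.g.\ $(-5)^{-5}=-5^{-5}$ for the quintic), so your version is more self-contained on this point. On the reduction step the two arguments lean on the same GKZ machinery: the paper invokes Theorem $1.12'$ of Chapter 10 to restrict the principal $A$-determinant to the edge $F$ and observes that every factor except the circuit's is constant, which is exactly the ``edge combinatorics'' you flag at the end as the genuine obstacle. So the paper buys rigor there by citation rather than by a new idea, and your honest identification of that step as the crux is accurate; neither argument proves it independently of \cite{GKZ:book}.
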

\begin{proof}
This is proven using various results from \cite{GKZ:book} as follows. The
perestroika is associated to an edge $F$ of the secondary polytope. 
This is is associated to a ``polyhedral subdivision'' $\{(Q_i,\cA_i)\}$
in the sense of chapter 7. Each $(Q_i,\cA_i)$ is a simplex (with no
interior points) except for one --- the one associated to the
circuit. By
theorem $1.12'$ in chapter 10, we can restrict the discriminant to $\Theta$
to get a discriminant
\begin{equation*}
  \Delta_W\|_F = \prod_ic_i\Delta_{W,i}^{p_i},
\end{equation*}
where $\Delta_{W,i}$ is the discriminant associated to $(Q_i,\cA_i)$
and $c_i,p_i$ are numbers of no significance to this discussion. Each
$\Delta_{W,i}$ associated to a simplex is a constant. The
$(Q_i,\cA_i)$ associated to the circuit has two triangulations. In
this simple case, the secondary fan is one-dimensional and thus
$\Delta_{W,i}$ is a function of a single variable. Now use theorem 1.2
of chapter 10 to write $\Delta_{W,i}$ as a product of GKZ
``A-discriminants''. For each of these factors use theorem 3.3 of
chapter 9 to show that each A-discriminant is a point. Using the
affine coordinate $z_{\pervec}$, equation (3.3) of chapter 9
yields that this point is given by (\ref{eq:zhit}). Note that the
discriminant has many components but we see that the components that
meet $\Theta$ all meet it at the same point.
\end{proof}

We wish to restrict the monodromy question to such rational curves.
Unfortunately this does not quite make since since, unless $r=1$,
$\Theta\cap\cM_0$ is empty -- that is, $\Theta$ is contained in one or more of the toric divisors. To fix this we need to deform $\Theta$ a little, keeping the limit points fixed, but so that most of it lies in $\cM_0$. This requires a choice to be made. Note that, depending on the normal bundle of $\Theta$, we may require non-holomorphic deformations to achieve this, so we think of $\Theta$ as a 2-sphere rather than a rational curve. For example, consider $\P^1\times\C$ to have affine coordinates $(x,y)$. The rational curve at $y=0$ can be deformed to a 2-sphere
\begin{equation*}
  y = \begin{cases}\epsilon x &|x|\leq R\\
     \frac{\epsilon R^2}{\overline x}&|x|\geq R,\end{cases}
\end{equation*}
which still passes through $(0,0)$ and $(\infty,0)$.

The basic idea is that there are essentially 3 points of $\Theta$ around
which there is monodromy. Two of them are given by the limit
points of the two phases joined by the perestroika. The third point\footnote{When we deform $\Theta$ this may split into several points, but in that case we consider monodromy around all of them at once.}
comes from the discriminant $\Delta_W$ hitting $\Theta$ at $P_\Delta$ given by (\ref{eq:zhit}).

\subsection{Monodromy around phase limits}  \label{ss:mlim}

In a phase corresponding to a large radius \CY, the monodromy around
the large radius limit is fully understood. This corresponds to
shifting the $B$-field by an element of $H^2(X,\Z)$. The action on the
derived category is tensoring by a line bundle. In our toric
setting this amounts to the following.  A loop near the large radius limit stays well away
from the discriminant as we saw in section \ref{ss:mod}, hence corresponds to an element of $\pi_1(T_\Cl)=\Cl$. Monodromy around a
loop corresponding to $\boldsymbol{\rho}\in \Cl$ induces the autoequivalence
\begin{equation*}
  -\otimes\O_X(-\boldsymbol{\rho}),
\end{equation*}
on $\DC(X)$.

\begin{figure}
\begin{center}
\begin{tikzpicture}[scale=1.0]
\draw (0,0) ellipse (2.5 and 3);
\filldraw (0,2.2) circle (0.07);
\draw (-0.6,2.2) node {$\scriptstyle z=0$};
\draw (1,2.1) node {$\scriptstyle -\otimes S(-\boldsymbol{\rho})$};
\draw (0.85,-2.1) node {$\scriptstyle -\otimes S(\boldsymbol{\rho})$};
\draw (-0.6,-2.2) node {$\scriptstyle z=\infty$};
\draw [<-] (0,2.4) arc (90:-60:0.2);
\draw (0,2.4) arc (90:240:0.2);
\draw [<-] (0,-2.4) arc (270:120:0.2);
\draw (0,-2.4) arc (-90:60:0.2);
\draw (-0.1,2.027) -- (-0.1,-2.027);
\draw (0.1,2.027) -- (0.1,-2.027);
\draw [->] (-0.1,1.0) -- (-0.1,0.8);
\filldraw (-0.1,1.0) circle (0.03);
\draw (-0.1,1.0) node[anchor=east]{\tiny Start};
\filldraw (0,-2.2) circle (0.07);
\filldraw (1.3,0.3) circle (0.07);
\draw (1.4,0.3) node[anchor=south west]{$P_\Delta$};
\draw (1.3,0.3) circle (0.2);
\draw [->] (1.3,0.5) -- (1.3001,0.5);
\draw[dashed] (-2.5,0) .. controls (-1.5,0.5) and (1.5,0.5)
.. (2.5,0);
\draw (-1.7,0.7) node {$\Sigma_+$};
\draw (-1.7,-0.3) node {$\Sigma_-$};
\end{tikzpicture}
\end{center} \caption{Monodromy in $\Theta$} \label{fig:mon1}
\end{figure}
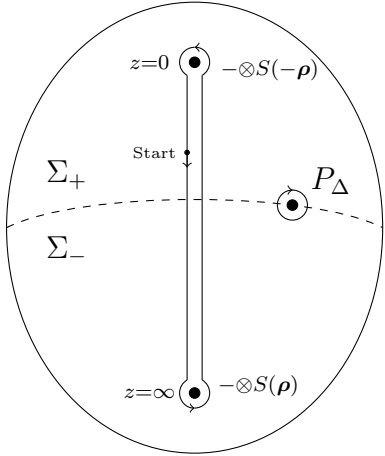

One of the key ideas in \cite{HHP:linphase} is to extend this to
monodromy around {\em all\/} limit points in $\cMcp$. In each phase the D-brane category is written as a quotient of the derived category of graded $S$-modules and the monodromy is $-\otimes S(-\boldsymbol{\rho})$. Thus we have a simple description of the monodromy action on the derived category
around all the limit points. The subtlety is that in each phase the quotient is different. This has been described in \cite{HHP:linphase,me:toricD,BFK:mf,MR2795327}; we also refer to \cite{me:hybridm,HW:unipot} for applications of this idea.\footnote{Indeed, \cite{me:hybridm,HW:unipot} may applied directly to find the monodromy for the limit points. However, the spherical functor approach we use gives a clearer picture of exactly which D-branes are \fluffy.}

Consider monodromy on a sphere with 3 points deleted as shown in figure
\ref{fig:mon1}. $\Theta$ determines a primitive vector $\pervec\in L$.
We fix the sign of $\pervec$ by saying that a vector
$\mathbf{v}$ in the interior of the maximal cone corresponding to
$\Sigma_+$ in the secondary fan satisfies $\langle
\pervec,\mathbf{v}\rangle>0$. 

Next we choose a vector $\boldsymbol{\rho}\in\Cl$ such that $\langle
\pervec, \boldsymbol{\rho}\rangle=1$. This choice of $\boldsymbol{\rho}$
corresponds to the choice of deformation of $\Theta$ at the end of section \ref{ss:peres}. The monodromies around $z=0$ and $z=\infty$ are then $-\otimes S(-\boldsymbol{\rho})$ and $-\otimes S(\boldsymbol{\rho})$ respectively. This in turn determines the monodromy around the discriminant point $P_\Delta$ as we will see in section \ref{ss:wallm}.

The main purpose of this paper is to compute the monodromy around $z=\infty$.  We have just said that this is given by $-\otimes S(\boldsymbol{\rho})$, but this form of the monodromy is too opaque for our purposes. First we will use this picture to find the monodromy around $P_\Delta$ and then use spherical functors to reformulate the monodromy around $z=\infty$.

\subsection{A Cubic Example} \label{ss:qeg}

The picture above is best illustrated by an example less trivial than
the quintic threefold. Let the pointset $\cA$ be given by the rows of
\begin{equation*}
  A^t=\begin{pmatrix}1&0&0&0&0\\1&1&0&0&0\\1&0&1&0&0\\
    1&0&0&1&0\\1&0&0&0&1\\1&-1&0&0&0\\1&1&-1&-1&-1\end{pmatrix}.
\end{equation*}
Then the matrix of charges is
\begin{equation}
  Q=\begin{pmatrix}-3&-1&0&1&1&1&1\\-2&1&1&0&0&0&0\end{pmatrix},
             \label{eq:Q6}
\end{equation}
so the superpotential is $W_X=x_0f$, where
\begin{equation}
f = x_1^2(x_3^5+x_4^5+x_5^5+x_6^5) +
x_2^2(x_3^3+x_4^3+x_5^3+x_6^3). \label{eq:f1}
\end{equation}
Here we have suppressed all the coefficients and listed only those
monomials at the vertices of the convex hull of the Newton polytope, writing a ``Fermat'' superpotential as an abbreviation for a generic superpotential: what we really mean is
\[ f = x_1^2\,g_5(x_3, \dotsc, x_6) + x_1 x_2\,g_4(x_3, \dotsc, x_6) + x_2^2\,g_3(x_3, \dotsc, x_6), \]
where $g_5$, $g_4$, and $g_3$ are polynomials of homogeneous of degree 5, 4, and 3 respectively. In fact $X_\Sigma$ would be singular if (\ref{eq:f1}) were used literally. This will be our convention for the superpotential from now on.

There are four triangulations of $\cA$, giving the following secondary fan:
\begin{center}
\begin{tikzpicture}[scale=1.0]
  \draw (0,0) -- (3,0);  \draw(1.5,1.5) node {CY};
  \filldraw (1,0) circle (0.05) node[anchor=south west] {(1,0)};
  \draw (0,0) -- (-3,-2); \draw(0.5,-1.5) node {$\P^3$};
  \filldraw (-3,-2) circle (0.05) node[anchor=south east] {$(-3,-2)$};
  \draw (0,0) -- (-2.5,2.5); \draw(-2,0) node {$\P^1$};
  \filldraw (-1,1) circle (0.05) node[anchor=east] {$(-1,1)$};
  \draw (0,0) -- (0,3); \draw(-1,2) node {Exo};
  \filldraw (0,1) circle (0.05) node[anchor=west] {(0,1)};
\end{tikzpicture}
\end{center}
Two of the phases correspond to hybrid models fibered, with \LG\
fibre, over $\P^1$ and $\P^3$ respectively, and are not of interest
to us in this paper. The two other phases are
\begin{itemize}
\item A smooth \CY\ phase $\Sigma_+$, with Cox ideal
\[ B_+ = (x_1,x_2) \cap (x_3,x_4,x_5,x_6). \]
The toric variety $Z_+$ is the canonical line bundle of a $\P^1$-bundle $\P(\O(-1)\oplus\O)$ over $\P^3$.  Explicitly, $x_0$ is the fibre coordinate for the line bundle, $x_1$ and $x_2$ are coordinates on the $\P^1$, and $x_3, \dotsc, x_6$ are coordinates on the $\P^3$.
  The critical locus $X_+$ is given by $x_0=f=0$, hence is an anti-canonical divisor in the $\P^1$-bundle. It is a small resolution of a double cover of $\P^3$ branched over an octic with 60 nodes, so
\begin{equation*}
h^{1,1}(X_+)=2, \quad h^{2,1}(X_+) = 90.
\end{equation*}
\item An ``exoflop'' phase in the sense of \cite{AGM:II}, with Cox ideal
\begin{equation*}
  B_- = (x_0,x_1)\cap(x_1,x_2)\cap(x_2,x_3,x_4,x_5,x_6).
\end{equation*}
The critical locus $X_-$ has two components, as can be seen directly by
computing the minimal primes of the saturation
$(\Jac(W_X):B_-^\infty)$:
\begin{equation*}
  \{(x_0,f),(x_3,x_4,x_5,x_6)\}.
\end{equation*}
The first component, $x_0=f=0$, denoted $X^\sharp$, is a singular quintic
hypersurface in $\P^4$ (with homogeneous coordinates
$x_2, \dotsc, x_6$) obtained by setting $x_1=1$ in $f$ using one of the two
$\C^*$ actions. The other component is given by
$x_3=\dotsb=x_6=0$, so we can set $x_2=1$, leaving a $\P^1$ with
homogeneous coordinates $x_0,x_1$. This component is non-reduced and
corresponds to a hybrid theory with \LG\ fibre. The two components
intersect at the point $(0,1,1,0,0,0,0)$, which is the singular
point of $X^\sharp$.
\end{itemize}

We pass from the \CY\ phase to the exoflop phase by collapsing the
surface $E = \{ x_0=x_1=0 \} \subset X_+$ to a point.\footnote{We will discuss this in detail in the next section, but briefly it is because $(x_0,x_1)$ is the only minimal prime of $B_-$ that is not contained in $B_+$.}  One of the $\C^*$ actions can be used to
set $x_2=1$, so from (\ref{eq:f1}) we see that $E$ is a {\em cubic
surface\/}.
\begin{center}
\begin{tikzpicture}[scale=1.0]
\draw[yscale=0.75] (-5,0) .. controls (-5,1) and  (-4,2) .. (-3,2) ..
    controls (-3,2) and (-1,2) .. (-1,0) ..
    controls (-1,-2) and (-3,-2) .. (-3,-2) .. 
    controls (-4,-2) and (-5,-1) .. (-5,0);
\draw (-4,0.6) node {$X_+$};
\filldraw[yscale=0.25,xscale=0.4,xshift=-2cm,fill=gray!50] 
   (-5,0) .. controls (-5,1.5) and  (-4,2) .. (-3,2) ..
    controls (-3,2) and (-1,2) .. (-1,0) ..
    controls (-1,-2) and (-3,-2) .. (-3,-2) .. 
    controls (-4,-2) and (-5,-1) .. (-5,0);
\draw (-2,-0.7) node {$E$};
\draw [->,thick] (-0.5,0) -- (0,0);
\draw[yscale=0.75,xshift=5.5cm] (-5,0) .. controls (-5,1) and  (-4,2) .. (-3,2) ..
    controls (-1.5,2) and (-2,0) .. (-1,0) ..
    controls (-2,-0) and (-1.5,-2) .. (-3,-2) .. 
    controls (-4,-2) and (-5,-1) .. (-5,0);
\draw (1.2,0.6) node {$X^\sharp$};
\draw[xshift=5.5cm,color=red!60!black,line width=2pt] (-1.1,1) -- (-0.9,-1);
\end{tikzpicture}
\end{center}
Note that $X^\sharp$ can be deformed into a smooth quintic.  This extremal transition from $X_+$ to the quintic is of the form studied in
\cite{ACJM:srch}; such toric extremal transitions always involve
exoflops in some form. For the examples studied in this paper, the component ``sticking out'' of the \CY\ threefold may have dimension one or two.


\subsection{Wall monodromy}  \label{ss:wallm}

Given that we have a specific form for the monodromy around limit
points, we can understand the monodromy around the third
point in figure~\ref{fig:mon1}. That is, we can compute the monodromy around the discriminant $\Delta_W$ associated with the codimension one wall in the secondary fan separating two phases.

This question has been analyzed in \cite{HHP:linphase,me:toricD,BFK:mf,shipping} and the results confirm a form of monodromy described by Horja \cite{Horj:EZ}. The essential idea is that we have an autoequivalence related to collapsing a subspace $E\subset X$. The sheaf $\O_E$ is associated to modules annihilated by $B_-$ but not $B_+$. We repeat the analysis in this section for completeness and to fix notation.

Given that we are on a 2-sphere with 3 points removed, and as
figure~\ref{fig:mon1} shows, this wall monodromy is obviously given by
the composition of the two phase limit monodromies. We will follow the path
shown in figure~\ref{fig:mon1}. As described in
\cite{HHP:linphase,me:toricD,MR2795327}, to move between the phases,
i.e., the hemispheres in this figure, we need to express an
object in $\DC(X)$ in terms of a suitable tilting collection or ``window''.  It is
easiest to first address the problem in terms of the ambient toric
stack $Z_\Sigma$ using proposition \ref{prop:DZ}, and then pass to
$\DC(X_\Sigma)$ using matrix factorizations.

Assume that we have a tilting collection of the form
\begin{equation}
  S(\mathbf{q}_1), S(\mathbf{q}_2), \dotsc, S(\mathbf{q}_k), \label{eq:tilting_collection}
\end{equation}
such that
\begin{equation*}
  H^p_{B_\pm}(\mathbf{q}_i-\mathbf{q}_j)=0 \qquad \text{for $p>0$,}
\end{equation*}
where $H^p_{B_\pm}$ are local cohomology groups and the above is
simultaneously true for the irrelevant ideals $B_+$ and
$B_-$ of the two phases.  This means that corresponding line bundles on $Z_\pm$ satisfy
\begin{align*}
\Ext^p_{Z_\pm}(\O(\mathbf{q}_j), \O(\mathbf{q}_i)) &= 0 \qquad \text{for $p>0$} \\
\Hom_{Z_\pm}(\O(\mathbf{q}_j), \O(\mathbf{q}_i)) &= S_{\mathbf{q}_i - \mathbf{q}_j}.
\end{align*}
Furthermore we require that the line bundles $\O(\mathbf{q}_i)$ generate $\DC(Z_\pm)$ in both phases. This can be done for
all the examples in this paper, but there can be obstructions to its
happening in all phases \cite{me:toricD}.

To perform the monodromy in figure~\ref{fig:mon1} on an object in
$\DC(X)$ we do the following:
\begin{enumerate}
\item Write the object in ``canonical form'', i.e., as a complex of free modules which are sums of the modules \eqref{eq:tilting_collection}.
  This allows us to pass into the southern phase.
\item Apply the monodromy $-\otimes S(\boldsymbol{\rho})$.
\item If the object is longer in canonical form, apply mapping cones to
  or from trivial objects until we are back in canonical form so that
  we can pass back into the northern hemisphere.
\item Apply $-\otimes S(\boldsymbol{-\rho})$.
\end{enumerate}

An object is ``trivial'' in step 3 iff it is annihilated by a power of
$B_-$, so we now analyze which objects those are.  Since $B_\pm$ are square-free monomial ideals they are radical, hence can be written as the intersection of their minimal primes:
\begin{equation}
\begin{split}
B_+ &= \frak{m}_1\cap\frak{m}_2\cap\ldots\cap\frak{m}_s\\
B_- &= \frak{m}_1'\cap\frak{m}_2'\cap\ldots\cap\frak{m}_{s'}'.
\end{split} \label{eq:Bpm}
\end{equation}
The upshot will be that we can order these so that $\frak{m}_1$ is the only minimal prime of $B_+$ not contained in $B_-$, and $\frak{m}_1'$ is the only minimal prime of $B_-$ not contained in $B_+$.  Then the ``trivial'' objects in step 3 will be generated by $S/\frak{m}_1'$ and its grade shifts.

Alexander duality tells us that the irrelevant ideal $B$ is dual to
the Stanley--Reisner ideal of a simplicial decomposition $\Sigma$
\cite{MS:combcomm}, and that the minimal primes $\mathfrak{m}_i$ are of
the form $(x_{i_1},x_{i_2},\ldots)$, where
$\{\alpha_{i_1},\alpha_{i_2},\ldots\}$ is a {\em minimal\/} set of
points that do not form the vertices of a simplex in $\Sigma$.  A perestroika is supported along a {\em circuit\/} $Z\subset\cA$ as
described in section 7.2C of \cite{GKZ:book}, and as we first discussed in section~\ref{ss:peres}. There is a decomposition
$Z=Z_+\cup Z_-$ given by the signs of the unique affine relationship
between the points in $Z$. Associated to this are two distinct
simplicial decompositions of $Z$ which extend to the simplicial decompositions $\Sigma_+$ and
$\Sigma_-$ of $\cA$ related by the perestroika. Again see \cite{GKZ:book}
for the details of this. The sets $Z_\pm$ are minimal non-faces of
$\Sigma_\pm$, and without loss of generality we can set
\begin{equation}
\begin{split}
  \frak{m}_1 &= (x_{i_1},x_{i_2},\ldots)\quad\textrm{for}\,\, Z_+=\{\alpha_{i_1},\alpha_{i_2},\ldots\}\\
  \frak{m}_1' &= (x_{j_1},x_{j_2},\ldots)\quad\textrm{for}\,\,
  Z_-=\{\alpha_{j_1},\alpha_{j_2},\ldots\}
\end{split} \label{eq:pmp}
\end{equation}
For the example in section~\ref{ss:qeg}, the wall given by the line $(0,1)$ in the secondary fan is dual to the
vector $\pervec=(1,0)\in L$, and this corresponds to a perestroika for
an affine relation given by the first row of $Q$, i.e.,
\begin{equation*}
  -3\alpha_0-\alpha_1+\alpha_3+\alpha_4+\alpha_5+\alpha_6=0.
\end{equation*}
Thus $\frak{m}_1' = (x_0,x_1)$ corresponds to the cubic surface $E \subset X_+$ that collapses in passing to the exoflop phase, and $\frak{m}_1 = (x_3,x_4,x_5,x_6)$ corresponds to the $\P^1$ component ``sticking out'' of $X_-$.

\begin{prop}
For two irrelevant ideals of the form (\ref{eq:Bpm}) related by a
perestroika supported on the first minimal primes (that is, we satisfy
(\ref{eq:pmp})) we have $\frak{m}_i\supset B_-$ 
and $\frak{m}_i'\supset B_+$ for any $i>1$.
\end{prop}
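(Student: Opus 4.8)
The plan is to convert the two ideal containments into a single combinatorial statement about the triangulations $\Sigma_+$ and $\Sigma_-$, and then use the fact that a perestroika is a purely local modification. Each $\frak{m}_i$ is a prime generated by a subset of the variables, and $B_-=\frak{m}_1'\cap\cdots\cap\frak{m}_{s'}'$, so to get $\frak{m}_i\supseteq B_-$ it suffices to produce a single $j$ with $\frak{m}_j'\subseteq\frak{m}_i$: then $B_-\subseteq\frak{m}_j'\subseteq\frak{m}_i$. By the Alexander-duality description recalled just above the proposition, $\frak{m}_i=(x_\alpha:\alpha\in N_i)$ for a minimal non-face $N_i$ of $\Sigma_+$ (with $N_1=Z_+$), and $\frak{m}_j'=(x_\alpha:\alpha\in N_j')$ for a minimal non-face $N_j'$ of $\Sigma_-$; moreover $\frak{m}_j'\subseteq\frak{m}_i$ is equivalent to $N_j'\subseteq N_i$. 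Thus the whole proposition reduces to the claim that every minimal non-face $N_i$ of $\Sigma_+$ with $i>1$ (that is, $N_i\neq Z_+$) is itself a non-face of $\Sigma_-$, for then it contains some minimal non-face $N_j'$ and we are done.

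Next I would record the one structural input about perestroikas, which I take from the theory of secondary polytopes in chapter 7 of \cite{GKZ:book}. The triangulations $\Sigma_+$ and $\Sigma_-$ share every simplex outside the flip region of the circuit $Z=Z_+\cup Z_-$, and inside that region they differ only by replacing the local triangulation of $\Conv(Z)$ whose minimal non-face is $Z_+$ by the one whose minimal non-face is $Z_-$. The consequence I need is: if a simplex $\sigma$ lies in $\Sigma_-$ but not in $\Sigma_+$, then $Z_+\subseteq\sigma$. (Writing such a $\sigma$ as $\tau\cup\lambda$ with $\tau=\sigma\cap Z$ and $\lambda$ a face of the common link disjoint from $Z$, lying in $\Sigma_-$ but not $\Sigma_+$ forces $\tau$ to be a face of the local triangulation with minimal non-face $Z_-$ but not of the one with minimal non-face $Z_+$; the latter means precisely $Z_+\subseteq\tau\subseteq\sigma$.) Symmetrically, every simplex of $\Sigma_+\setminus\Sigma_-$ contains $Z_-$.

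With this in hand the argument is short. Fix $i>1$ and suppose, for contradiction, that $N_i$ is a face of $\Sigma_-$. Since $N_i$ is a non-face of $\Sigma_+$, it is a simplex of $\Sigma_-$ that does not lie in $\Sigma_+$, so the structural input gives $Z_+\subseteq N_i$. But $Z_+=N_1$ is itself a non-face of $\Sigma_+$, and $N_i$ is a \emph{minimal} non-face of $\Sigma_+$ containing it, so minimality forces $N_i=Z_+=N_1$, contradicting $i>1$. Hence $N_i$ is a non-face of $\Sigma_-$, it contains a minimal non-face $N_j'$, and $\frak{m}_j'\subseteq\frak{m}_i$ yields $\frak{m}_i\supseteq B_-$. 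Exchanging the roles of $+$ and $-$ (which merely swaps $Z_+\leftrightarrow Z_-$ and $\Sigma_+\leftrightarrow\Sigma_-$) gives $\frak{m}_i'\supseteq B_+$ for $i>1$.

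As for where the weight of the proof lies: the commutative-algebra reduction and the minimality step are both immediate, so the only substantive ingredient is the local picture of a perestroika — specifically the claim that the simplices created in passing from $\Sigma_+$ to $\Sigma_-$ are exactly those acquiring $Z_+$. I expect this to be the main point to state correctly, and I would lean on \cite{GKZ:book} rather than reprove it, since the paper already invokes that machinery for the circuit decomposition $Z=Z_+\cup Z_-$.
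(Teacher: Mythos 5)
Your proof is correct and follows essentially the same route as the paper's: both reduce, via Alexander duality, to showing that every minimal non-face of $\Sigma_+$ other than $Z_+$ remains a non-face of $\Sigma_-$, and both derive this from the GKZ description of a perestroika (a simplex of $\Sigma_-$ absent from $\Sigma_+$ must contain $Z_+$) together with minimality of the non-face. Your write-up merely makes the duality reduction and the local flip structure more explicit than the paper does.
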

\begin{proof}
Consider a set of points
$\{\alpha_{i_1},\alpha_{i_2},\ldots\}$ which are not the vertices of a
simplex in $\Sigma_+$ but which are the vertices of a simplex in $\Sigma_-$.
Thus $x_{\alpha_{i_1}}x_{\alpha_{i_2}}\ldots$ lies in the
Stanley--Reisner ideal $B_+^\vee$ but not in $B_-^\vee$. From the
description of simplicial decompositions related by a perestroika in section 7.2C
of \cite{GKZ:book}, this must mean that a nonempty subset
$D\subset\{\alpha_{i_1},\alpha_{i_2},\ldots\}$ also lies in $Z_+$. If
such a set of points arose from a minimal prime $\frak{m}_i$ of $B_+$
with $i\neq1$ then $D$ must be a {\em proper\/} subset of
$\{\alpha_{i_1},\alpha_{i_2},\ldots\}$. This is a contraction with
$\frak{m}_i$ being minimal since the subset $D$ is also a non-face of
$\Sigma_+$. So if $\{\alpha_{i_1},\alpha_{i_2},\ldots\}$ gives a minimal
non-face of $\Sigma_+$ it must also be a non-face $\Sigma_-$. The proposition is
the Alexander dual of this statement.
\end{proof}

One should note that a minimal
non-face of $\Sigma_+$ need not be a {\em minimal\/} non-face of $\Sigma_-$. 
Therefore, the minimal primes $\frak{m}_2,\ldots,\frak{m}_s$ and
$\frak{m}_2',\ldots,\frak{m}_{s'}'$ in (\ref{eq:Bpm}) need not
coincide. Indeed we saw this in the example in section \ref{ss:qeg}.

\begin{prop}
Let $B=\frak{m}_1 \cap \dotsb \cap \frak{m}_s$ be a decomposition of the irrelevant ideal
into minimal primes. If $M$ is a finitely-generated $S$-module
annihilated by some power of $B$ then $M$ is in the full triangulated
subcategory of $\DC(\gr(S))$ generated by $S/\frak{m}_1, \dotsc, S/\frak{m}_s$ and their grade shifts.
\end{prop}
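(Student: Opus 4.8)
The plan is to reduce the statement to the cyclic modules $S/\frak{p}$ attached to graded primes $\frak{p}$, and then to build each such module out of the generators $S/\frak{m}_i$ by means of a finite free resolution over a polynomial quotient ring. Write $\cT$ for the triangulated subcategory of $\DC(\gr(S))$ generated by $S/\frak{m}_1,\dots,S/\frak{m}_s$ and their grade shifts. First I would invoke the graded prime filtration: the finitely-generated multigraded module $M$ admits a finite chain of graded submodules $0=M_0\subset M_1\subset\dots\subset M_t=M$ with $M_k/M_{k-1}\cong (S/\frak{p}_k)(\mathbf{a}_k)$ for graded primes $\frak{p}_k$ and grade shifts $\mathbf{a}_k$. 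Each short exact sequence $0\to M_{k-1}\to M_k\to (S/\frak{p}_k)(\mathbf{a}_k)\to 0$ is a distinguished triangle, so $M$ lies in the triangulated subcategory generated by the $(S/\frak{p}_k)(\mathbf{a}_k)$; since $\cT$ is closed under grade shifts it suffices to show each $S/\frak{p}_k\in\cT$. Because $M$ is annihilated by $B^N$ and $B$ is radical, every $\frak{p}_k$ contains $B=\frak{m}_1\cap\dots\cap\frak{m}_s$, and being prime it contains some minimal prime $\frak{m}_l$.

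The crux is then: for a graded prime $\frak{p}\supseteq\frak{m}_l$, show $S/\frak{p}\in\cT$. Here $R:=S/\frak{m}_l$ is an honest (multigraded) polynomial ring, since $\frak{m}_l$ is generated by a subset of the variables; in particular $R$ is regular. By the graded Hilbert syzygy theorem \cite{MS:combcomm}, the finitely-generated $R$-module $S/\frak{p}=R/(\frak{p}/\frak{m}_l)$ admits a finite graded free resolution $0\to F_c\to\dots\to F_0\to S/\frak{p}\to 0$ in which each $F_i$ is a finite direct sum of grade shifts of $R=S/\frak{m}_l$. Viewing $F_\bullet$ as a bounded complex of $S$-modules quasi-isomorphic to $S/\frak{p}$, and using that a bounded complex lies in the triangulated subcategory generated by its terms (via the brutal truncations, which present it as an iterated cone of its stalks), I would conclude that $S/\frak{p}$ lies in the subcategory generated by the grade shifts of $S/\frak{m}_l$, hence in $\cT$. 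Combined with the first paragraph this gives $M\in\cT$.

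The point to watch, and the source of the real content, is that the filtration primes $\frak{p}_k$ are in general strictly larger than the minimal primes $\frak{m}_i$: embedded primes can occur, and they need not even be monomial with respect to the coarse $\Cl$-grading, so one cannot read the conclusion directly off the decomposition \eqref{eq:Bpm}. The assertion is precisely that these ``extra'' primes contribute nothing new, and the step that does the work is the finite graded free resolution over $R=S/\frak{m}_l$; the only properties used are that $\frak{m}_l$ is generated by variables and that finite graded free resolutions exist over a multigraded polynomial ring. The one thing I would take care to state correctly is that this resolution can be taken with \emph{finitely generated} graded frees and of \emph{finite length} --- which is exactly the graded syzygy theorem, so there is no hidden subtlety about graded-projective versus graded-free modules.
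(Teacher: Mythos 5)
Your proof is correct, but it takes a genuinely different route from the paper's. You filter $M$ by a graded prime filtration, observe that every prime $\frak{p}$ occurring there lies in $\supp M = V(\operatorname{Ann} M) \subseteq V(B)$ and hence contains some minimal prime $\frak{m}_l$, and then resolve each cyclic subquotient $S/\frak{p}$ by a finite graded free resolution over the polynomial ring $S/\frak{m}_l$. The paper instead first reduces to the case $BM=0$ by induction on $N$ via $0 \to BM \to M \to M/BM \to 0$, and then filters $M$ directly by the chain $M \supseteq \frak{m}_1 M \supseteq \frak{m}_2\frak{m}_1 M \supseteq \dotsb$, which terminates after $s$ steps because $\frak{m}_s \cdots \frak{m}_1 \subseteq B$; each successive quotient is annihilated by $\frak{m}_i$ and is resolved over $S/\frak{m}_i$ exactly as in your argument. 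The key input is identical in both cases --- finiteness of graded free resolutions over the polynomial quotients $S/\frak{m}_i$ --- and both arguments share the same standard implicit reliance on finitely generated graded projectives over such rings being graded free. What the paper's filtration buys is that it bypasses the prime filtration entirely, so one never needs the (true, since $\Cl$ is torsion-free, but still worth checking) fact that associated primes of $\Cl$-graded modules are graded; what your version buys is that the modules being resolved are cyclic, and your closing remark correctly isolates where the content lies: the primes in the support of $M$ need not be monomial, and the resolution step over $S/\frak{m}_l$ is precisely what shows they contribute nothing beyond the generators $S/\frak{m}_i$ and their grade shifts.
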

\begin{proof}
Assume that $M$ is annihilated by $B^N$. Consider the
short exact sequence
\begin{equation*}
\xymatrix@1@M=2mm{
0\ar[r]&BM\ar[r]&M\ar[r]&M'\ar[r]&0.
}
\end{equation*}
Then $M'$ is annihilated by $B$ and $BM$ is annihilated by
$B^{N-1}$; thus by induction it is enough to prove the proposition for modules annihilated by $B$. Suppose $M$ is such a module, and consider the exact sequence
\begin{equation*}
\xymatrix@1@M=2mm{
0\ar[r]&\frak{m}_1M\ar[r]&M\ar[r]&M_1\ar[r]&0.
}
\end{equation*}
The module $M_1$ is annihilated by $\frak{m_1}$ and so is an
$S/\frak{m}_1$-module. Since $\frak{m}_1$ is an ideal simply of the form
$(x_{i_1},x_{i_2},\ldots)$, the ring $S/\frak{m}_1$ has finite global
dimension and thus $M_1$ has a finite free resolution in terms
of $S /\frak{m}_1$ and its grade shifts. To analyze the remaining
part, $\frak{m}_1M$, repeat the process
with
\begin{equation*}
\xymatrix@1@M=2mm{
0\ar[r]&\frak{m}_2\frak{m}_1M\ar[r]&\frak{m}_1M\ar[r]&M_2\ar[r]&0,
}
\end{equation*}
and resolve $M_2$ by a finite free resolution in
$S/\frak{m}_2$. Repeating the process for all the $\frak{m}_i$'s
finally annihilates $M$ for the left term in the short exact sequence
completing the proof. 
\end{proof}

Now the module $S/\frak{m}_i$ is obviously annihilated by $B_+$, but it is also annihilated by $B_-$ if $i>1$.  Thus the effect of
steps 1--4 above to applying mapping cones to and from $S/\frak{m}_1'$ and its grade shifts.

So far we have discussed D-branes on the ambient $Z_\Sigma$. To pass
to $\DC(X_\Sigma)$ we consider the category of matrix factorizations of
$W$. As discussed in \cite{HHP:linphase,me:toricD,MR2795327,BFK:mf},
we can again use the notion of tilting collection or window. That
is, we carry D-branes between the phases using matrix factorizations
involving free modules whose summands are in the tilting collection.
The result again is that the only effect of going around the loop is
to applying mapping cones to and from the matrix factorization
corresponding to $S/\frak{m}_1'$ and its grade shifts.

In every example in this paper, the wall monodromy turns out to be an EZ-spherical twist in the sense of Horja \cite{Horj:EZ}.\footnote{It has recently been proved that this is a general feature of ``window shifts'' \cite[\S3.2]{shipping}.} That is, moving
to the wall of a large radius \CY\ phase is equivalent to moving to a
wall of the K\"ahler cone where some subset $E\subset X_\Sigma$ collapses down to a smaller variety $Z$, with maps
\begin{equation}
\xymatrix{
E\, \ar[d]^q\ar@{^{(}->}[r]^i &X_\Sigma\\
Z
} \label{eq:EZ}
\end{equation}
and the monodromy gives an autoequivalence
\begin{equation}
  \mathsf{B} \mapsto \Cone(i_*q^*q_*i^!\mathsf{B}\to\mathsf{B}).
        \label{eq:EZm}
\end{equation}
Up to shifts depending on the tilting collection used, the object
$S/\frak{m}_1'$ corresponds to a matrix factorization giving the
sheaf $\O_E$. 

In the case that $Z$ is a point, this reduces to the Seidel--Thomas
spherical twist \cite{ST:braid}
\begin{equation}
\mathsf{B} \mapsto \Cone(\RHom(\O_E,\mathsf{B})\otimes\O_E\to
 \mathsf{B}), \label{eq:ST}
\end{equation}
where $\O_E$ is a spherical object, i.e.,
\begin{equation*}
\Ext^i(\O_E,\O_E) = \begin{cases}\C&\quad\hbox{if $i=0$ or 3}\\
0&\quad\hbox{otherwise.}\end{cases}
\end{equation*}

For the example in section \ref{ss:qeg} we can use a tilting collection
\begin{equation*}
  T = \bigoplus_{\substack{0\leq p<4\\0\leq q<2}} S(p,q).
\end{equation*}
The perestroika between the \CY\ phase and the exoflop phase is given
by $\pervec=(1,0)$, and $\frak{m}_1'=(x_0,x_1)$. Note that $\coker(x_0)$
always corresponds (up to shifts) to $\O_X$ given as the obvious
matrix factorization
\begin{equation*}
  \xymatrix@R=0mm@C=20mm{
     S(-3,-2) \ar@<1mm>[r]^-f & S\ar@<1mm>[l]^-{x_0}.}
\end{equation*}
Thus $S/(x_0,x_1)$ can be written as the cokernel of
\begin{equation*}
\xymatrix@1{\O_X(1,-1)\ar[r]^-{x_1}&\O_X,}
\end{equation*}
which is $\O_E$, where $E$ is the cubic surface. In this
case, the surface contracts to a point corresponding to $Z$
in (\ref{eq:EZ}). So the monodromy corresponding to the wall
is given by a Seidel--Thomas spherical twist on $\O_E$. 

We should note that it was the choice of sign convention $\langle
\pervec,\boldsymbol{\rho}\rangle=+1$ that leads to an EZ-transformation rather than its inverse. In the case of the cubic surface, a basis was chosen for the matrix $Q$ in (\ref{eq:Q6}) so that $\boldsymbol{\rho}=(1,0)^t$ and $\pervec=(1,0)$. This will always be the case in further examples.
We also note for future use that in all cases, $\boldsymbol{\rho}$ corresponds to an effective divisor class.

\section{Spherical Functors}  \label{s:sphF}

In the last section we encountered the spherical twists of Seidel--Thomas and Horja.  In this section we review a generalization due to Rouquier \cite{MR2258045} and Anno \cite{Anno:sph}; for a more thorough review see \cite[\S1]{nick}.  The essential idea will be that a semiorthogonal decomposition of $\DC(Z)$, where $Z$ is as in \eqref{eq:EZ}, yields a factorization of the autoequivalence \eqref{eq:EZm}.

\begin{definition} \label{def:sph}
Let $\A$ and $\B$ be triangulated categories admitting Serre functors $S_\A$ and $S_\B$, and $F: \A \to \B$ an exact functor with right adjoint $R: \B \to \A$.  The {\bf cotwist} $C$ and {\bf twist} $T$ associated to $F$ are the cones on the unit and counit of the adjunction:\footnote{There are several ways to make these cones of functors rigorous.  One is to require that $\A$ and $\B$ be admissible subcategories of $\DC(X)$ and $\DC(Y)$ for smooth projective varieties $X$ and $Y$, and $F$ be induced by an object of $\DC(X \times Y)$.  Another is to choose DG-enhancements of $\A$ and $\B$.}
\begin{align*}
C &= \Cone(1 \xrightarrow\eta RF) & T &= \Cone(FR \xrightarrow\epsilon 1). 
\end{align*}
The functor $F$ is called {\bf spherical} if:
\begin{enumerate}
\item \label{C_is_equiv}
$C$ is an equivalence, and
\item \label{CY_condition}
$S_\B F C \cong F S_\A.$\footnote{Many authors require that a certain natural map $F S_\A \to S_\B F C$ be an isomorphism, but in fact any isomorphism will do \cite[\S1.5]{nick}.}
\end{enumerate}
\end{definition}

\begin{theorem}[Rouquier, Anno]
If $F: \A \to \B$ is spherical then the twist $T: \B \to \B$ is an equivalence.
\end{theorem}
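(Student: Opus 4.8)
\emph{Plan.} The goal is to manufacture a two-sided inverse for $T$. The essential observation is that, although only the right adjoint $R$ of $F$ is given, the presence of Serre functors forces $F$ to have a left adjoint as well: $L := S_\A R S_\B^{-1}$ satisfies $L \dashv F$, with its own unit $\eta'\colon 1_\B \to FL$ and counit $\epsilon'\colon LF \to 1_\A$. The natural candidate for $T^{-1}$ is the \emph{dual twist} $T' := \Cone(1_\B \xrightarrow{\eta'} FL)[-1]$, sitting in a triangle $T' \to 1_\B \xrightarrow{\eta'} FL \to T'[1]$. The whole proof then amounts to checking $T'T \cong 1$ and $TT' \cong 1$.

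\emph{Key intertwining relation.} First I would prove that $T$ is invertible ``along $F$'', in the sharp form $TF \cong FC[1]$. This rests entirely on the triangle identity $(\epsilon F)\circ(F\eta) = 1_F$. Precomposing the defining triangle of $T$ with $F$ presents $TF$ as $\Cone(FRF \xrightarrow{\epsilon F} F)$, while postcomposing the defining triangle of $C$ with $F$ presents $FC$ as $\Cone(F \xrightarrow{F\eta} FRF)$. Applying the octahedral axiom to the factorization of $1_F = (\epsilon F)(F\eta)$ through $FRF$, whose total cone $\Cone(1_F)$ vanishes, yields a triangle relating $FC$, $0$, and $TF$, hence $TF \cong FC[1]$. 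Since $C$ is an equivalence (spherical condition \eqref{C_is_equiv}) and $[1]$ is an equivalence, $T$ is already invertible on the image of $F$. Running the identical zig-zag/octahedron argument on the right gives the companion relation $RT \cong CR[1]$.

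\emph{Assembling the inverse.} Exactly the same octahedral argument applied to the adjunction $L \dashv F$ produces a companion relation $T'F \cong F\tilde C[-2]$, where $\tilde C := \Cone(LF \xrightarrow{\epsilon'} 1_\A)$ is the cotwist attached to $L$. Here spherical condition \eqref{CY_condition}, $S_\B F C \cong F S_\A$, is the load-bearing hypothesis: together with $L = S_\A R S_\B^{-1}$ and Serre duality it is precisely what identifies $\tilde C$ with (a shift of) $C^{-1}$, so that $T'F \cong FC^{-1}[-1]$. Combining this with $TF \cong FC[1]$ gives $T'TF \cong F \cong TT'F$, so $T$ and $T'$ are mutually inverse on the image of $F$; feeding the analogous relations for $R$ and $L$ into the defining triangles of $T$ and $T'$ (via a $3\times 3$ or octahedral diagram, or a convolution of Fourier--Mukai kernels) then collapses the composite kernels of $T'T$ and $TT'$ to the diagonal on all of $\B$. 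Concretely, I would exhibit $T$ and $T'$ as an adjoint pair and verify that the unit $1 \to T'T$ and counit $TT' \to 1$ built from the four (co)units $\eta,\epsilon,\eta',\epsilon'$ are isomorphisms.

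\emph{Main obstacle.} Two difficulties stand out. The first is foundational: the ``cones of functors'' $T$, $C$, $T'$, $\tilde C$ are not well defined in a bare triangulated category, so the entire argument must be carried out where cones are functorial --- Fourier--Mukai kernels on $\DC(X\times Y)$ or DG/stable enhancements, as in the footnote to Definition~\ref{def:sph}. The second, and the genuine mathematical content, is the final verification that the glued transformations $1 \to T'T$ and $TT'\to 1$ are honest \emph{isomorphisms} rather than merely abstract objectwise equivalences. This is where condition \eqref{CY_condition} must be used in full force: condition \eqref{C_is_equiv} alone only controls the behaviour of $T$ on the image of $F$, and it is the Serre-duality compatibility that lets one propagate invertibility across the complement and match the maps. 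Tracking this compatibility through the composition is the crux of the Rouquier--Anno theorem.
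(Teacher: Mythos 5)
The paper does not actually prove this theorem --- it is quoted from Rouquier \cite{MR2258045} and Anno \cite{Anno:sph}, with \cite[\S1]{nick} cited as a review --- so there is no in-paper argument to measure you against. The closest in-paper material is proposition \ref{prop:TF=FC}, whose octahedron proof of $TF \cong FC[1]$ you reproduce correctly as your ``key intertwining relation,'' and your overall architecture (left adjoint from the Serre functors, dual twist $T' = \Cone(1 \to FL)[-1]$, comparison of the two cotwists using condition \ref{CY_condition}) is indeed the architecture of the published proofs. One concrete slip: the left adjoint is $L = S_\A^{-1} R S_\B$, not $S_\A R S_\B^{-1}$; already in example \ref{ex:ST}, where $S_\A = 1$, $S_\B = [3]$ and $L(\mathsf b) = \RHom(\mathsf b,\O_E)^* \cong \RHom(\O_E,\mathsf b)[3] = RS_\B(\mathsf b)$, your formula is off by $[-6]$.

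The genuine gap is the last step. The relations $T'TF \cong F \cong TT'F$ control the composites only on objects of the form $F\a$, and the essential image of $F$ essentially never generates $\B$: in example \ref{ex:ST} it is the tiny subcategory generated by $\O_E$ inside $\DC(X)$. So there is no ``complement'' across which invertibility can be ``propagated'' from $\Image F$; one must compute $T'T$ and $TT'$ outright. The actual computation convolves the two defining triangles into a double cone on $FRFL \to FR \oplus FL \to 1$ (entirely parallel to the double cone \eqref{eq:double_cone_1} in the proof of theorem \ref{thm:twist_factorization}), and collapses it using the identity $R \cong CL$: with the paper's conventions, condition \ref{CY_condition} is precisely equivalent to this identity (take right adjoints of $S_\B FC \cong FS_\A$ and substitute $L = S_\A^{-1}RS_\B$; in example \ref{ex:ST}, $CL = \RHom(\O_E,-)[3][-3] = R$). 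That is where both hypotheses of definition \ref{def:sph} do their work, and it is exactly the computation your sketch labels ``the crux'' and then omits. As written, therefore, your proposal is a correct and well-informed plan --- the right adjoints, the right intermediate lemmas, the right place to use each hypothesis --- but not yet a proof, because the statement being proved is concentrated in the step you defer.
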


\begin{example}[Seidel--Thomas] \label{ex:ST}
In section \ref{ss:wallm} we had a spherical object $\O_E$ on a \CY\ threefold $X$.  To put it in this framework, take $\A = \DC(\text{point})$, i.e., the category of graded vector spaces, $\B = \DC(X)$, and $F = \O_E \otimes -$.  Then $R = \RHom(\O_E, -)$, and the spherical condition $\RHom(\O_E,\O_E) = \C \oplus \C[-3]$ is equivalent to saying that the cotwist $C = [-3]$, so conditions \ref{C_is_equiv} and \ref{CY_condition} above are satisfied.  The twist $T$ is exactly \eqref{eq:ST}.
\end{example}

\begin{example}[Horja]
In the more general setup of \eqref{eq:EZ} and \eqref{eq:EZm}, we take $F = i_* q^*: \DC(Z) \to \DC(X)$; some of the hypotheses were suppressed in our earlier discussion, but the twist $T$ is exactly \eqref{eq:EZm}.
\end{example}

\begin{example}[inclusion of a divisor] \label{eg:divisor}
This example is simple but will prove very useful.  Let $i: E \hookrightarrow X$ be the inclusion of a divisor in a \CY\ threefold, and $F = i_*$.  (This can be seen as a Horja twist in which $E = Z$ and $q$ is the identity.)  Then we find that the cotwist $C = \O_E(E)[-1] \otimes -$, which equals $S_E[-3]$ by the adjunction formula, so condition \ref{CY_condition} above is satisfied, and the twist $T = \O_X(E) \otimes -$.
\end{example}

Of course we did not need any fancy technology to tell us that this last $T$ is an equivalence, but the example becomes interesting when paired with the following fact:

\begin{theorem}[Kuznetsov, unpublished] \label{thm:twist_factorization}
Let $F: \A \to \B$ be a spherical functor with cotwist $C = S_\A[-k]$ for some $k \in \Z$,\footnote{This is typically the case when $\B$ is \CY\ of dimension $k$; compare condition \ref{CY_condition} of definition \ref{def:sph}.} and twist $T$.  If $\A$ admits a semi-orthogonal decomposition
\begin{equation*}
\A = \langle \A_1, \A_2, \dotsc, \A_n \rangle
\end{equation*}
then $F_i := F|_{\A_i} : \A_i \to \B$ is spherical with cotwist $C_i = S_{\A_i}[-k]$ for each $i$, and the twists $T_i$ satisfy
\begin{equation}
T_1 T_2 \dotsm T_n = T. \label{eq:twist_factorization}
\end{equation}
\end{theorem}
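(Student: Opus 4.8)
The plan is to verify the two spherical conditions for each $F_i$ together with the cotwist formula by transporting the structure of $F$ along the inclusions, and then to prove the twist identity by induction on $n$, reducing to the two-term case. Throughout I write $\iota_i\colon\A_i\hookrightarrow\A$ for the inclusions, with left adjoint $\iota_i^*$ and right adjoint $\iota_i^!$; since $F_i=F\iota_i$, its right adjoint is $R_i=\iota_i^! R$. For the induction I set $\A'=\langle\A_2,\dotsc,\A_n\rangle$, which is again admissible, and regard $\A=\langle\A_1,\A'\rangle$ as a two-term decomposition.

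First I would pin down the cotwist $C_i$. Applying $\iota_i^!(-)\iota_i$ to the defining triangle $1\to RF\to C$ and using that $\iota_i$ is fully faithful (so $\iota_i^!\iota_i\cong 1_{\A_i}$) produces a triangle $1_{\A_i}\to R_iF_i\to \iota_i^! C\iota_i$. A short check that the unit of the composite adjunction $(F_i,R_i)$ is $\iota_i^!(\eta)\iota_i$ (unit compatibility for composed adjunctions, trivialised by full faithfulness of $\iota_i$) identifies this with the defining triangle of $C_i$, so $C_i\cong\iota_i^! C\iota_i=\iota_i^! S_\A\iota_i[-k]=S_{\A_i}[-k]$, using the standard formula $S_{\A_i}=\iota_i^! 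S_\A\iota_i$. In particular $C_i$ is an equivalence, giving condition \ref{C_is_equiv}. For condition \ref{CY_condition}, I would first extract from the sphericality of $F$ (condition \ref{CY_condition} together with $C=S_\A[-k]$) the relation $S_\B F\cong F[k]$; precomposing with $\iota_i$ gives $S_\B F_i\cong F_i[k]$, and then $S_\B F_iC_i=S_\B F_i S_{\A_i}[-k]\cong F_i[k]S_{\A_i}[-k]=F_iS_{\A_i}$, as required. Thus each $F_i$ is spherical with cotwist $S_{\A_i}[-k]$, and by Rouquier--Anno each $T_i$ is an equivalence.

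For the twist identity I would induct on $n$, the heart being $n=2$. Given $b\in\B$, the decomposition $\A=\langle\A_1,\A_2\rangle$ (convention $\Hom(\A_2,\A_1)=0$) furnishes a functorial triangle $\gamma_2 Rb\to Rb\to\lambda_1 Rb$, where $\gamma_2=\iota_2\iota_2^!$ and $\lambda_1=\iota_1\iota_1^*$ are the right and left projectors. Applying $F$ and composing with the counit $\epsilon\colon FRb\to b$, the term $F\gamma_2 Rb=F_2R_2 b$ maps to $b$ by exactly $\epsilon_2$, so the octahedral axiom on $F_2R_2 b\to FRb\to b$ yields a triangle $F\lambda_1 Rb\to T_2 b\to Tb$, i.e.\ $Tb\cong\Cone(F\lambda_1 Rb\to T_2 b)$. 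Comparing with $T_1T_2 b=\Cone(F_1R_1 T_2 b\to T_2 b)$, it remains to produce a natural isomorphism $F\lambda_1 Rb\cong F_1R_1 T_2 b$ compatible with the maps to $T_2 b$; granting this, $T=T_1T_2$, and feeding $\A=\langle\A_1,\A'\rangle$ into this two-term case (with $F|_{\A'}$ spherical of cotwist $S_{\A'}[-k]$ by the argument above) together with the inductive identity $T_{\A'}=T_2\dotsm T_n$ gives $T=T_1\dotsm T_n$.

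The main obstacle is precisely this last identification. The canonical decomposition of $Rb$ presents the $\A_1$-part through the \emph{left} adjoint $\iota_1^*$, whereas $T_1$ is built from $R_1=\iota_1^! R$, i.e.\ from the \emph{right} adjoint; matching them means converting $\iota_1^*$ into $\iota_1^!$. This is where the Calabi--Yau hypothesis $C=S_\A[-k]$ is indispensable: via $\iota_1^!\cong S_{\A_1}\iota_1^* S_\A^{-1}$ and the analogous Serre/shift symmetry on $\B$, one rewrites $\iota_1^! R(T_2 b)$ in terms of $\iota_1^* Rb$, the shifts cancelling exactly because the cotwist is a Serre functor up to $[-k]$. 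I expect the genuine work to lie in making all these cones and connecting maps functorial and natural --- most cleanly handled in a DG-enhancement or via Fourier--Mukai kernels, as the footnote to Definition \ref{def:sph} allows --- and in verifying the compatibility of units and counits under the composite adjunctions used above.
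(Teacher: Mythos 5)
Your argument is essentially correct, and in fact it follows the route that the paper only gestures at in its closing remark (``argue using the exact triangle $I_2 I_2^r \xrightarrow{\epsilon_2} 1 \xrightarrow{\eta_1} I_1 I_1^l$ and a few applications of the octahedral axiom'') rather than the main argument. The paper's primary proof writes $T_1T_2$ as a double cone on $FI_1I_1^rR\,FI_2I_2^rR \to FI_1I_1^rR\oplus FI_2I_2^rR \to 1$, simplifies the first term via $I_1^rRFI_2\cong I_1^rI_2$, and collapses the result using the identity $\Cone(I_1I_1^rI_2I_2^r\to I_1I_1^r\oplus I_2I_2^r)=1$ coming from the vanishing of the composite of the two projections onto $\A_i^\perp$. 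You instead decompose $Rb$ by the semiorthogonal decomposition, apply one octahedron to exhibit $Tb$ as $\Cone(FI_1I_1^*Rb\to T_2b)$, and then must match $FI_1I_1^*Rb$ with $F_1R_1T_2b=FI_1I_1^!RT_2b$. Both proofs ultimately turn on the same key vanishing, and your preliminary verification that each $F_i$ is spherical with cotwist $S_{\A_i}[-k]$ is a reasonable expansion of the result the paper simply cites.

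The one step you leave open --- the identification $FI_1I_1^*Rb\cong F_1R_1T_2b$ --- does close, and by exactly the lemma the paper uses, so let me record it since your sketch of the resolution (invoking a ``Serre/shift symmetry on $\B$'') is vaguer than necessary. Applying $I_1^!(-)I_2$ to the triangle $1\to RF\to C$ gives $I_1^!I_2\to I_1^!RFI_2\to I_1^!CI_2$, and
\begin{equation*}
I_1^!CI_2 \;=\; I_1^!S_\A I_2[-k]\;=\;S_{\A_1}I_1^*I_2[-k]\;=\;0,
\end{equation*}
since $I_1^*I_2=0$ by semiorthogonality; hence $I_1^!RFI_2\cong I_1^!I_2$. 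Now $I_1^!RT_2b$ is the cone on $I_1^!RFI_2I_2^!Rb\to I_1^!Rb$, which by the above is the cone on $I_1^!I_2I_2^!Rb\to I_1^!Rb$; and applying $I_1^!$ to the decomposition triangle $I_2I_2^!\to 1\to I_1I_1^*$ identifies that cone with $I_1^*Rb$. This is the same computation as the paper's $I_1^rRFI_2\cong I_1^rI_2$ (their $I_1^r,I_1^l$ are your $I_1^!,I_1^*$), merely deployed at a different point in the argument. The remaining caveat --- checking that these isomorphisms are compatible with the maps into $T_2b$, which strictly requires a DG-enhancement or kernels --- applies equally to the paper's double cones, and you flag it appropriately.
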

\begin{proof}[Sketch proof]
By induction we can assume that $n=2$.  The statement that $F_i$ is spherical with cotwist $S_{\A_i}[-k]$ is \cite[\S1.2, Prop.]{nick}, so we need only prove \eqref{eq:twist_factorization}.  We give a rough argument making free use of double cones, which are really only legitimate in a DG-enhancement, and then indicate an alternative argument that stays in triangulated categories.  For a more elaborate proof see \cite[Thm.~4.13]{shipping}.

Let $I_i: \A_i \to \A$ be the inclusions and $I_i^l, I_i^r: \A \to \A_i$ their left and right adjoints.  Then $F_i = F I_i$, so the twists $T_i$ are the cones
\begin{equation*}
T_i = \Cone(F I_i I_i^r R \xrightarrow{\epsilon_i} 1),
\end{equation*}
and their composition is the double cone
\begin{equation} \label{eq:double_cone_1}
T_1 T_2 = \Cone(F I_1 I_1^r R  F I_2 I_2^r R
\xrightarrow{\left( \begin{smallmatrix} \epsilon_2 \\ \epsilon_1 \end{smallmatrix} \right)}
F I_1 I_1^r R \oplus F I_2 I_2^r R
\xrightarrow{\left( \begin{smallmatrix} \epsilon_1 & -\epsilon_2 \end{smallmatrix} \right)}
1).
\end{equation}

To simplify the first term of \eqref{eq:double_cone_1}, take the exact triangle
\begin{equation*}
1 \to RF \to C,
\end{equation*}
apply $I_1^r$ on the left and $I_2$ on the right to get
\begin{equation*}
I_1^r I_2 \to I_1^r RF I_2 \to I_1^r C I_2,
\end{equation*}
and observe that
\begin{equation*}
I_1^r C I_2 = I_1^r S_\A I_2 [-k] = S_{\A_1} I_1^l I_2 [-k] = 0,
\end{equation*}
where in the last step we have $I_1^l I_2 = 0$ because $\Hom(\A_2,\A_1) = 0$.  Thus $I_1^r RF I_2 = I_1^r I_2$, so \eqref{eq:double_cone_1} becomes
\begin{equation} \label{eq:double_cone_2}
T_1 T_2 = \Cone(F I_1 I_1^r I_2 I_2^r R \to F I_1 I_1^r R \oplus F I_2 I_2^r R \to 1).
\end{equation}

Next observe that the projections into $\A_i^\perp$ are
\begin{equation*}
\Cone(I_i I_i^r \to 1),
\end{equation*}
and their composition is on the one hand zero (since the $\A_i$ generate $\A$) and on the other hand the double cone
\begin{equation*}
\Cone(I_1 I_1^r I_2 I_2^r \to I_1 I_1^r \oplus I_2 I_2^r \to 1).
\end{equation*}
Thus we have
\begin{equation*}
\Cone(I_1 I_1^r I_2 I_2^r \to I_1 I_1^r \oplus I_2 I_2^r) = 1,
\end{equation*}
so \eqref{eq:double_cone_2} becomes
\begin{equation*}
T_1 T_2 = \Cone(FR \to 1) = T
\end{equation*}
as desired.

For the reader who is suspicious of double cones, we mention that it is also possible to argue using the exact triangle
\begin{equation*}
I_2 I_2^r \xrightarrow{\epsilon_2} 1 \xrightarrow{\eta_1} I_1 I_1^l
\end{equation*}
and a few applications of the octahedral axiom.
\end{proof}

In view of Definition \ref{def:fluffy}, we are interested in objects of $\B$ on which $T$ acts in a particularly simple way.  We will concentrate on objects of the form $F\a$, where $\a \in \A$.  The following generalizes the well-known fact that the spherical twist \eqref{eq:ST} sends $\O_E$ to $\O_E[-2]$:

\begin{prop} \label{prop:TF=FC}
If $F$ is a spherical functor with cotwist $C$ and twist $T$, then $TF \cong FC[1]$.
\end{prop}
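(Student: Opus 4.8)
The claim is that $TF \cong FC[1]$, where $C = \Cone(1 \xrightarrow{\eta} RF)$ and $T = \Cone(FR \xrightarrow{\epsilon} 1)$. The plan is to start from the defining triangle for $C$, apply $F$ to it, and relate the resulting triangle to the one defining $T$ by means of the triangle identities (zig-zag equations) for the adjunction $(F,R)$. First I would write the exact triangle defining the cotwist,
\begin{equation*}
1 \xrightarrow{\eta} RF \to C,
\end{equation*}
and apply the exact functor $F$ on the left, obtaining
\begin{equation*}
F \xrightarrow{F\eta} FRF \to FC.
\end{equation*}
Thus $FC = \Cone(F \xrightarrow{F\eta} FRF)$, and shifting gives $FC[1] = \Cone(FRF \xrightarrow{F\eta} F)[1]$ up to a sign, so it suffices to identify this cone with $TF$.

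**Comparing with the twist triangle.**
On the other side, $T = \Cone(FR \xrightarrow{\epsilon} 1)$, so precomposing with $F$ gives the exact triangle
\begin{equation*}
FRF \xrightarrow{\epsilon F} F \to TF.
\end{equation*}
Hence $TF = \Cone(FRF \xrightarrow{\epsilon F} F)$. Comparing the two, I have $FC[1]$ as the (shifted) cone of $F\eta : F \to FRF$ and $TF$ as the cone of $\epsilon F : FRF \to F$. The key input is the first triangle identity for the adjunction, namely $(\epsilon F) \circ (F\eta) = \mathrm{id}_F$. This says that $F\eta$ is a split monomorphism with retraction $\epsilon F$, so the cones of these two maps are related by a shift and a sign: if $\epsilon F \circ F\eta = \mathrm{id}$, then $\Cone(\epsilon F) \cong \Cone(F\eta)[1]$.

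**The main obstacle and how to handle it.**
The real content, and the step I expect to be most delicate, is justifying that $\Cone(F\eta)[1] \cong \Cone(\epsilon F)$ purely from the splitting relation $\epsilon F \circ F\eta = \mathrm{id}_F$. In a triangulated setting one cannot naively rotate, since $F\eta$ and $\epsilon F$ point in opposite directions; instead I would use the octahedral axiom applied to the composable pair $F \xrightarrow{F\eta} FRF \xrightarrow{\epsilon F} F$ whose composite is the identity. The octahedron on this composite yields an exact triangle relating $\Cone(F\eta)$, $\Cone(\epsilon F)$, and $\Cone(\mathrm{id}_F) = 0$, which forces $\Cone(\epsilon F) \cong \Cone(F\eta)[1]$. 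Because $\Cone(\mathrm{id}) = 0$, this is clean. Chaining the identifications gives $TF = \Cone(\epsilon F) \cong \Cone(F\eta)[1] = FC[1]$, as desired. The one caveat, as flagged in the footnote of Definition \ref{def:sph}, is that cones of functors must be interpreted in a DG- or Fourier--Mukai enhancement for these manipulations to be functorial rather than merely objectwise; I would either invoke that enhancement or note that the isomorphism can be checked on objects via the naturality of $\eta$ and $\epsilon$.
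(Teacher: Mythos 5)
Your proof is correct and follows essentially the same route as the paper's: both hinge on the triangle identity $(\epsilon F)\circ(F\eta)=\mathrm{id}_F$ and then apply the octahedral axiom to the composite $F \to FRF \to F$, using $\Cone(\mathrm{id}_F)=0$ to conclude $TF \cong FC[1]$. The only blemish is the throwaway line ``shifting gives $FC[1]=\Cone(FRF\to F)[1]$ up to a sign,'' which garbles the direction of $F\eta$; but you immediately retract the naive rotation and do the octahedron correctly, so the argument stands.
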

\begin{proof}
This is covered in \cite[\S1.3]{nick}, but it is essential to the present paper and the proof is fun and easy, so we give the proof in full.  By standard category theory, the diagram
\begin{equation*} \xymatrix{
F \ar[r]^{F\eta} \ar@{=}[rd] & FRF \ar[d]^{\epsilon F} \\ & F
} \end{equation*}
commutes.  Extend this to
\begin{equation*} \xymatrix{
& TF[-1] \ar[d] \ar[rd] \\
F \ar[r]^{F\eta} \ar@{=}[rd] & FRF \ar[d]^{\epsilon F} \ar[r] & FC \\
& F
} \end{equation*}
where the row and column are exact.  Then by the octahedral axiom, the cone on the map $TF[-1] \to FC$ equals the cone on $\xymatrix{F \ar@{=}[r] & F}$, which is zero, so $TF[-1] \to FC$ is an isomorphism.
\end{proof}

Finally, in order to commute spherical twists past one another we will need the following fact, which follows easily from the definitions:
\begin{prop} \label{prop:PhiF}
If $F: \A \to \B$ is a spherical functor with cotwist $C$ and twist $T$ and $\Phi$ is any autoequivalence of $\B$, then $\Phi F$ is spherical with cotwist $C$ and twist $\Phi T \Phi^{-1}$.  Put another way, we have
\begin{flalign*}
&& \Phi T_F = T_{\Phi F} \Phi. && \qed
\end{flalign*}
\end{prop}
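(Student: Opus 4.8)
The plan is to unwind the adjunction data for $\Phi F$ directly from the definitions and read off the cotwist and twist. First I would identify the right adjoint of $\Phi F$. Since $\Phi$ is an autoequivalence with quasi-inverse $\Phi^{-1}$, for all $\a \in \A$ and $\mathsf{b} \in \B$ we have
\[ \Hom_\B(\Phi F \a, \mathsf{b}) \cong \Hom_\B(F\a, \Phi^{-1}\mathsf{b}) \cong \Hom_\A(\a, R\Phi^{-1}\mathsf{b}), \]
so the right adjoint of $\Phi F$ is $R\Phi^{-1}$.

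Next I would track the unit and counit through this composite adjunction. The unit of $(\Phi F, R\Phi^{-1})$ is the composite $1 \xrightarrow{\eta} RF = R\Phi^{-1}\Phi F$, which is literally $\eta$ under the canonical identification $\Phi^{-1}\Phi \cong 1$. Hence the cotwist of $\Phi F$ is $\Cone(1 \to R\Phi^{-1}\cdot \Phi F) = \Cone(1 \xrightarrow{\eta} RF) = C$, as claimed. The counit of $(\Phi F, R\Phi^{-1})$ is $\Phi F R \Phi^{-1} \xrightarrow{\Phi\epsilon\Phi^{-1}} 1$; since $\Phi$ and $\Phi^{-1}$ are exact and commute with the formation of cones, the twist is
\[ \Cone(\Phi FR\Phi^{-1} \to 1) = \Phi\,\Cone(FR \xrightarrow{\epsilon} 1)\,\Phi^{-1} = \Phi T \Phi^{-1}. \]

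It remains to verify that $\Phi F$ really is spherical, i.e. the two conditions of Definition \ref{def:sph}. Condition \ref{C_is_equiv} is immediate because the cotwist of $\Phi F$ equals $C$, which is an equivalence since $F$ is spherical. For condition \ref{CY_condition} I would invoke the standard fact that a Serre functor commutes with every autoequivalence, so $S_\B \Phi \cong \Phi S_\B$; then
\[ S_\B (\Phi F) C \cong \Phi S_\B F C \cong \Phi F S_\A = (\Phi F) S_\A, \]
where the middle isomorphism is condition \ref{CY_condition} for $F$. Finally the reformulation $\Phi T_F = T_{\Phi F}\Phi$ is just a rearrangement of $T_{\Phi F} = \Phi T_F \Phi^{-1}$.

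The computations are all routine; the only point needing care is the bookkeeping of the unit and counit natural transformations, inserting the canonical isomorphisms $\Phi\Phi^{-1}\cong 1 \cong \Phi^{-1}\Phi$ consistently so that the cones are genuinely $C$ and $\Phi T\Phi^{-1}$ and not merely isomorphic to them by an uncontrolled map. Since the footnote to Definition \ref{def:sph} records that any isomorphism suffices in condition \ref{CY_condition}, I expect no real difficulty here, and there is no substantive obstacle.
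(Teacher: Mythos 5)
Your proof is correct and is exactly the routine unwinding the paper has in mind when it asserts the proposition ``follows easily from the definitions'' and omits the argument: identify $R\Phi^{-1}$ as the right adjoint, observe the unit is unchanged so the cotwist is $C$, conjugate the counit to get $\Phi T\Phi^{-1}$, and use $S_\B\Phi\cong\Phi S_\B$ for the Calabi--Yau condition. No issues.
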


\section{\fluffy ness} \label{s:fluffy}

\subsection {Monodromy and Stability} \label{ss:stab}

D-branes stability is closely tied to the concept of monodromy
\cite{AD:Dstab}. As one moves in the moduli space $\cM$, the
stability condition changes. On following fixed objects in $\DC(X)$
around a loop in this, the stability at the start of the
loop may differ from the stability at the end.  The set of stable
objects form a nontriangulated full subcategory of $\DC(X)$.  The
automorphism of the derived category associated to a loop can be
viewed as that which induces an equivalence between the subcategories
of stable objects at the start and at the end of the loop.

In some cases the object remains stable for the full traverse of the
loop and the monodromy can be deduced by a more direct method. For
example, an object stable near the large radius limit can remain
stable for a shift in the $B$-field. Then we get the monodromy
described in section \ref{ss:mlim}. Another case, of central
importance to this paper, concerns loops around ``bad'' points where a
D-brane goes massless. 

A stable object $\a$ has a real number
$\xi(\a)$ associated to it, where
\begin{equation*}
  \xi(\a) = \frac1\pi\arg \cz(\a)\pmod2,
\end{equation*}
and $\cz(\a)$ is the central charge.  This number varies continuously (so long as the object is stable) as we move in $\cM$. There is also the relation
\begin{equation*}
  \xi(\a[n]) = \xi(\a)+n.
\end{equation*}
Now suppose we have a disk in $\cM$ parametrized by $y$ on which $\a$ is stable for $y\neq0$, but we have monodromy $\a\mapsto\a[2]$ around $y=0$. It follows from the above relations that $Z(\a)$ has a zero at $y=0$.

Mass is proportional to $|\cz(\a)|$. 
Conformal field theories are ``bad'' when stable D-branes become
massless \cite{Str:con}. So it is very natural to see monodromy like
this around the compactification divisor of $\cMcp$. This motivates the following:

\begin{definition} \label{def:fluffy}
Let $T$ be the autoequivalence associated to monodromy in a positive
direction around a small holomorphic disk in $\cMcp$ which intersects
the compactification divisor of $\cMcp$ at a single point $P$ in the
interior. We define an object $\a$ to be {\bf\fluffy}\
at $P$ if we have
\begin{equation*}
  T^q(\a) = \a[p].
\end{equation*}
for some positive integers $p$ and $q$. 
\end{definition}

If $y$ is again the coordinate of the disk
with $P$ given by $y=0$ then this is consistent with the central charge $\cz$ going like $y^{p/2q}$. Since the mass is proportional to $|\cz|$, the object becomes massless at $P$ (except in \LG\ theories as we discuss shortly).

Note that the spherical twist (\ref{eq:ST}) sends $\O_E$ to
$\O_E[-2]$. Given that the monodromy in figure \ref{fig:mon1} is in a
negative direction around $P$, this is consistent with $\cz$ varying
as $y$ for the D-brane $\O_E$. Indeed, as discussed in \cite{AKH:m0},
the EZ-transformations produce a whole class of objects for which $\cz$
varies as $y$ which are \fluffy\ with $(p,q)=(1,2)$.

It is important to note that $\xi$ is only defined for a {\em
  stable\/} object in $\DC(X)$. Indeed, there are objects whose K-theory
class would imply that $\cz=0$ at some point $P\in\cMcp$ but which
cannot possibly be massless D-branes. For example, let $X$ be the
canonical line bundle over $\P^2$ and let $\ell$ be a line on this
$\P^2$. The analysis in section 7.3.4 of \cite{me:TASI-D} shows that
the object $\O_\ell(-1)$ has $\cz=0$ at the orbifold point where the $\P^2$
is shrunk down to a point. Since the orbifold is a smooth CFT, this
object cannot be stable. But monodromy around this orbifold point
satisfies $T^3=\id$ on all objects in $\DC(X)$ and so $\O_\ell(-1)$
cannot be \fluffy.

A very interesting question is whether an object being \fluffy\ at $P$
is sufficient for it to be a stable massless D-brane. It is
extremely difficult to prove that an object is stable at a given point
in $\cMcp$, but it is known that at least some stable objects must
become massless if $P$ corresponds to a singular conformal field
theory.

Note that there are cases where $\cz$ goes as $z^{p/2q}\log(z)$, as we
will see later. Thus a stable massless D-brane is not necessarily
\fluffy. There is also a special case that needs some care where {\em every\/} object is $\Q$-massless, i.e., $T^q=[p]$ identically. This happens in \LG\ orbifold theories as is well-known for the quintic for example \cite{me:TASI-D}. The central charge $Z$ can be computed by computing periods
\begin{equation*}
   \varpi(\Gamma) = \int_{\Gamma}\Omega,
\end{equation*}
on 3-cycles $\Gamma$ in the mirror $Y$ of $X$. The central charge is then computed by $Z(\Gamma)=\kappa^{-\frac12}\varpi(\kappa)$, where
\begin{equation*}
   \kappa = \int_Y\overline{\Omega}\wedge\Omega.
\end{equation*}
For a \LG\ orbifold theory {\em all\/} of the periods $\varpi(\Gamma)$ and $\kappa$ are zero. The monodromy is thus related to an artifact of the normalization parameter $\kappa$ and none of the central charges actually vanish. Therefore, so long as {\em some\/} of the objects in $\DC(X)$ have nonzero periods and $Z$ as $y\to0$ we know $\kappa$ cannot be zero. In the cases of interest in this paper, we will keep part of $X$ at large radius which enforces this condition. It follows that not all objects will be $\Q$-massless, and that those which are have $Z\to0$.

\subsection{Fractional \CY\ Categories}  \label{ss:Kuz}

Using proposition \ref{prop:TF=FC}, we will obtain many \fluffy\ objects from spherical functors $F: \A \to \DC(X)$ where $X$ is a \CY\ threefold, the cotwist $C = S_\A[-3]$, and $\A$ is a ``fractional \CY\ category'' of dimension $p/q$, meaning that $S_\A^q = [p]$.  Indeed, under these hypotheses we have
\begin{equation*}
T^q(F\a) = F(C^q(\a))[q] = F(S_\A^q(\a))[-2q] = F(\a)[p-2q],
\end{equation*}
so the whole image of $F$ is \fluffy.  Our fractional \CY\ categories will be subcategories of derived categories of hypersurfaces in weighted projective space:

\begin{theorem}[Kuznetsov]  \label{thm:Kuz1}
Let $Y$ be a smooth hypersurface of degree $d$ in weighted projective space $\P^{n+1}_{\{w_0,w_1,\ldots,w_{n+1}\}}$.\footnote{In particular, $Y$ misses all the orbifold singularities in this weighted projective space.}  Let $k = \sum w_i - d$, so $\omega_Y = \O_Y(-k)$, and suppose that $k > 0$, so $Y$ is Fano.  Then the subcategory $\A \subset \DC(Y)$ defined by the semi-orthogonal decomposition
\begin{equation*}
  \DC(Y) = \langle \A,\O_Y,\O_Y(1),\ldots, \O_Y(k-1)\rangle,
\end{equation*}
has $S_\A^d = [nd-2k]$, i.e., $\A$ is fractional \CY\ of dimension $(nd-2k)/d$.
\end{theorem}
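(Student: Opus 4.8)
The plan is to compute the Serre functor $S_\A$ directly from the geometry of the semi-orthogonal decomposition together with the Serre functor $S_Y = \O_Y(-k)\otimes-[n]$ of the Fano hypersurface $Y$ (note $\dim Y = n$). The key mechanism is Kuznetsov's notion of the \emph{rotation functor}: the semi-orthogonal decomposition $\DC(Y) = \langle \A, \O_Y, \O_Y(1), \ldots, \O_Y(k-1)\rangle$ admits another decomposition obtained by applying $S_Y^{-1} = \O_Y(k)\otimes-[-n]$, and comparing the two expresses the ``mutation'' that pushes $\A$ around inside $\DC(Y)$. Concretely, I would introduce the autoequivalence of $\DC(Y)$ given by tensoring with $\O_Y(1)$ and study how it interacts with the exceptional blocks $\O_Y(i)$.

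First I would set up the twist functor $\mathsf{O} := \O_Y(1)\otimes-$ and observe that it cyclically permutes the collection $\O_Y, \ldots, \O_Y(k-1)$ while shifting $\A$. The composite $S_\A$ should be realized, up to shift, as a specific power of a ``degree-shift'' operation built from $\mathsf{O}$ and the projection onto $\A$; the clean statement is that $S_\A$ is the restriction to $\A$ of the functor obtained by composing $d$ rotations, since the hypersurface of degree $d$ relates $\O_Y(i)$ and $\O_Y(i-d)$ through the Koszul/defining-equation resolution $\O_Y(-d)\to\O_Y$. The arithmetic I expect to emerge is: a single rotation acts on $\A$ as $S_\A^{-1}$ up to tensoring and shifting, and after $d$ rotations one returns to a pure shift. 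Tracking the shifts through the $n$-dimensional Serre functor $S_Y$ and the $d$-fold iteration is what produces the exponent $nd - 2k$.

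The main obstacle — and the heart of the proof — is the shift bookkeeping: one must carefully combine (i) the shift $[n]$ coming from $S_Y$, (ii) the twist $\O_Y(-k)$ in $S_Y$ together with the $k$ line bundles $\O_Y, \ldots, \O_Y(k-1)$ appearing in the decomposition, and (iii) the degree-$d$ relation governing how $\O_Y(i)$ and $\O_Y(i-d)$ communicate. The cleanest route is to prove the functorial identity
\begin{equation*}
S_\A^d \cong [nd - 2k]
\end{equation*}
by showing $S_\A$ is given on $\A$ by $\mathsf{R}^d$ for a rotation functor $\mathsf{R}$ satisfying $\mathsf{R}^d \cong [nd-2k]$ identically, rather than by chasing individual objects. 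I would verify the exponent against a sanity check: when $Y$ is a Calabi–Yau threefold hypersurface we have $k = 0$ and $n = 3$, and one expects the genuine (non-fractional) statement $S_\A = [3]$ to be recovered in the appropriate limit, which fixes the normalization $nd - 2k$ up to the factor structure. A fully rigorous treatment would require a DG or $\Ainf$-enhancement of the relevant categories so that the rotation functor and its iterates are well-defined, mirroring the rigor caveats already flagged for the cone-of-functors constructions earlier in the paper; I would invoke such an enhancement and then reduce the claim to the identity $\mathsf{R}^d \cong [nd-2k]$, whose proof is the genuine content.
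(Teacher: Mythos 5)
Your overall strategy coincides with the paper's: both arguments hinge on the rotation functor $\mathsf{O}(\a) = L_{\O_Y}(\a \otimes \O_Y(1))[-1]$, where $L_{\O_Y}$ is left mutation past $\O_Y$, and both ultimately defer the substantive lemmas to Kuznetsov (Lem.~4.1--4.2 of the $V_{14}$ paper) and to Canonaco--Karp for the weighted adaptation. But the shift bookkeeping --- which you yourself identify as ``the heart of the proof'' --- is misstated in two places. First, a single rotation does \emph{not} act on $\A$ as $S_\A^{-1}$ up to shift; the correct identity is $(\mathsf{O}|_\A)^k = S_\A^{-1}[n-k]$, i.e.\ it takes $k$ rotations, one for each exceptional line bundle $\O_Y,\ldots,\O_Y(k-1)$ in the decomposition, to produce an inverse Serre functor. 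Your version is only correct when $k=1$. Second, your proposed endgame --- showing ``$S_\A$ is given on $\A$ by $\mathsf{R}^d$ for a rotation functor $\mathsf{R}$ satisfying $\mathsf{R}^d \cong [nd-2k]$'' --- literally asserts $S_\A \cong [nd-2k]$, which contradicts the statement being proved (the conclusion is $S_\A^d = [nd-2k]$, and $\A$ is genuinely only \emph{fractional} \CY, so no such identity for $S_\A$ itself can hold).

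The correct pair of identities is $(\mathsf{O}|_\A)^k = S_\A^{-1}[n-k]$ together with $(\mathsf{O}|_\A)^d = [2-d]$; the second is where the degree-$d$ relation $\O_Y(-d)\to\O_Y$ enters, exactly as you anticipated. Computing $(\mathsf{O}|_\A)^{kd}$ both ways yields $S_\A^{-d}[d(n-k)] = [k(2-d)]$ and hence $S_\A^d = [nd-2k]$. Note that your \CY\ sanity check ($k=0$, $n=3$) cannot detect either error, since both of your misstatements degenerate correctly when $k=0$ (where $\A=\DC(Y)$ and one rotation trivially equals $k$ rotations equals the identity count). So the skeleton of your argument is the right one, but as written the two exponent-carrying identities are wrong, and without them the formula $nd-2k$ cannot actually be derived.
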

\begin{proof}[Sketch Proof]
\def\OJ{\mathsf{O}}
Consider the functor $\OJ:\DC(Y)\to\DC(Y)$ defined by
\begin{equation*}
  \OJ(\a) = L_{\O_Y}(\a\otimes \O_Y(1))[-1],
\end{equation*}
where $L_{\O_Y}: \DC(Y) \to \DC(Y)$ is left mutation past $\O_Y$:
\begin{equation} \label{eq:mutation}
L_{\O_Y}(\a) = \Cone(\RHom(\O_Y, \a) \otimes \O_Y \to \a).
\end{equation}
Then $\OJ$ preserves $\A$.  We have on the one hand $(\OJ|_\A)^k = S_\A^{-1}[n-k]$, and on the other hand $(\OJ|_\A)^d = [2-d]$.  Kuznetsov proves this for hypersurfaces in ordinary projective space in \cite[Lem.~4.1 and 4.2]{kuz:V14}, and the adaptation to weighted projective space is discussed in \cite[Rmk.~4.7]{CK:mon}.
\end{proof}

Let $Y$ be cut out by a degree-$d$ polynomial $W \in \C[x_0,\dotsc,x_{n+1}]$.  The category $\A$ in
the above proposition is exactly the category of graded
matrix factorizations of $W$ \cite{Orlov:mfc}, which is the category of D-branes for the $\Z_d$-orbifold of the \LG\ theory with superpotential $W$. It is interesting to note that the central charge of this theory, which is a measure of its dimension, is given by 
\[ \sum_{i=0}^{n+1} (1 - 2w_i/d) = (nd-2k)/d. \]

\section{Examples}   \label{s:app}

In section \ref{ss:wallm} we found the monodromy around $P_\Delta$ using our knowledge of the monodromy around the limit points $z=0$ and $z=\infty$.  We now switch tactics to obtain the monodromy around the limit point $z=\infty$ using the known monodromy around $z=0$ and $P_\Delta$. At first sight this seems perverse; the issue is that the way we described the monodromy around $z=\infty$ in section \ref{ss:mlim} as $-\otimes\O(\boldsymbol{\rho})$ is not very enlightening. The results will be more clear if we use the spherical functor language and theorem \ref{thm:twist_factorization}. We want to obtain $T_\infty$ in terms of the composition of $T_\Delta$ and $T_0$ as shown in figure \ref{fig:mon2}. Note that the orientation of the loops around $z=0$ are reversed with respect to figure \ref{fig:mon1}.

In each case the idea is to view $T_0$ as coming from an inclusion of a divisor
$E\hookrightarrow X$ as in example \ref{eg:divisor}. Then $T_0$ is split into $T_\Delta$ and $T_\infty$ according to theorem \ref{thm:twist_factorization} by doing a semiorthogonal decomposition of $\DC(E)$. This theorem also implies that all the cotwists involved satisfy
\begin{equation}
  C = S[-3].  \label{eq:cotwist}
\end{equation}

\begin{figure}
\begin{center}
\begin{tikzpicture}[scale=1.0]
\draw (0,0) ellipse (2.5 and 3);
\filldraw (0,2.2) circle (0.07);
\draw (-0.6,2.2) node {$\scriptstyle z=0$};
\draw (0,2.2) node[anchor=south west] {$T_0$};
\draw (-0.6,-2.2) node {$\scriptstyle z=\infty$};
\draw (0,2.4) arc (90:-25.6:0.2);
\draw [<-](0,2.4) arc (90:274.4:0.2);
\draw (1.3,0.1) arc (-90:-205.6:0.2);
\draw [<-](1.3,0.1) arc (-90:94.4:0.2);
\draw (0.01,2.003) -- (1.12,0.375);
\draw (0.178,2.12) -- (1.29,0.5);
\draw [->] (0.69,1.0) -- (0.5,1.285);
\filldraw (0.69,1.0) circle (0.03);
\draw (0.69,1.0) node[anchor=east]{\tiny Start};
\filldraw (0,-2.2) circle (0.07);
\draw (0,-2.2) circle (0.2);
\draw (0,-2.2) node[anchor=south west] {$T_\infty$};
\draw [->] (0,-2.0) -- (-0.001,-2.0);
\filldraw (1.3,0.3) circle (0.07);
\draw (1.4,0.3) node[anchor=south west]{$T_\Delta$};
\draw[dashed] (-2.5,0) .. controls (-1.5,0.5) and (1.5,0.5)
.. (2.5,0);
\draw (-1.7,0.7) node {$\Sigma_+$};
\draw (-1.7,-0.3) node {$\Sigma_-$};
\end{tikzpicture}
\end{center} \caption{Monodromy in $\Theta$} \label{fig:mon2}
\end{figure}
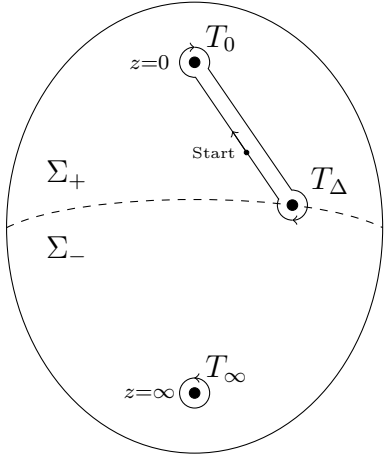

\subsection{Del Pezzo Surfaces} \label{ss:delP}

The first cases we consider are when the passage from phase $\Sigma_+$ to $\Sigma_-$ results in a del Pezzo surface $E \subset X = X_+$ collapsing to a point.  In each example we will have $\rank L = 2$; we will choose a basis of starting with the perestroika $\pervec \in L$, and let $D$ and $H$ be the dual basis of $\Cl$, which we identify with $\Pic(X)$.  Thus $T_0 = \O_X(D) \otimes -$ as we discussed at the end of section~\ref{ss:mlim}.  By the method of section~\ref{ss:wallm} we find that $T_\Delta = T_{\O_E}$, the Seidel--Thomas spherical twist around $\O_E$.  In each case we will have:
\begin{itemize}
\item $D$ and $H$ are effective divisor classes on $X$.
\item $E = H-mD$ as divisor classes on $X$, where $m$ is the ``index'' of the del Pezzo surface $E$, i.e., the greatest integer such that the canonical line bundle $\omega_E$ has an $m^\text{th}$ root: 3 for $\P^2$, 2 for $\P^1 \times \P^1$, and 1 otherwise.
\item $H|_E = 0$.  Note that if $E$ is to collapse while the rest of $X$ remains nonzero in size then there must be such a divisor class.
\end{itemize}

\subsubsection{$\P^2$}  \label{ss:P2}

Let $E=\P^2$. If $E\subset X$ is contracted to a point, it is
well-known that we obtain a local $\C^3/\Z_3$ orbifold singularity.  In example \ref{eg:divisor} we saw that the functor $i_*: \DC(E) \to \DC(X)$ is spherical, with twist
\begin{equation} \label{eq:T_DE}
T_{\DC(E)} = \O_X(E) \otimes -.
\end{equation}
By theorem \ref{thm:twist_factorization}, the semi-orthogonal decomposition
\begin{equation*}
  \DC(E) = \langle\O_E,\O_E(1),\O_E(2)\rangle
\end{equation*}
allows us to factor \eqref{eq:T_DE} as a product of Seidel--Thomas twists:
\begin{equation*}
  T_{\DC(E)} = T_{\O_E}\, T_{\O_E(1)}\, T_{\O_E(2)}.
\end{equation*}
With respect to figure \ref{fig:mon2} we have $T_0 = \O_X(D) \otimes -$, $T_\Delta = T_{\O_E}$, and $T_\infty=T_\Delta T_0$.  We have $\O_E(D) = \O_E(1)$, using the facts that $\O_X(E) = \O_X(H-3D)$, $\O_E(E) = \O_E(-3)$, and $\O_E(H) = \O_E$.  Thus using proposition \ref{prop:PhiF} we get
\begin{equation*}
\begin{split}
T_\infty^3
&= T_{\O_E}\, T_0\, T_{\O_E}\, T_0\, T_{\O_E}\, T_0 \\
&= T_{\O_E}\, T_{\O_E(1)}\, T_0^2\, T_{\O_E}\, T_0 \\
&= T_{\O_E}\, T_{\O_E(1)}\, T_{\O_E(2)}\, T_0^3 \\
&= T_{\DC(E)}\, T_0^3 \\
&= \O_X(E+3D) \otimes - \\
&= \O_X(H) \otimes -,
\end{split}
\end{equation*}
as is well-known --- see \cite{me:navi,me:TASI-D} for a less sophisticated derivation.  This is the monodromy around the large radius limit in the
direction we are ``ignoring''. The divisor $H$ is far away from $E$, so aside from global geometry issues we essentially have $T_\infty^3=1$. In particular there are no \fluffy\ objects. Note that $z=\infty$ is a
perfectly good orbifold conformal field theory where nothing
unpleasant happens and thus we know for sure that there are no massless D-branes at all.

To see this geometry in a global example, let $X$ be the resolution of the degree 18
hypersurface in weighted $\P^4_{\{9,6,1,1,1\}}$ as studied in
\cite{CFKM:II}. For the toric picture, the matrix $Q$ is given by
\begin{equation*}
\begin{array}{c|ccccccc}&x_0&x_1&x_2&x_3&x_4&x_5&x_6\\\hline
D&0&0&0&1&1&1&-3\\H&-6&3&2&0&0&0&1\end{array}.
\end{equation*}
The superpotential is $W=x_0f$, where (as always, suppressing coefficients and just giving monomials at the vertices of the Newton polytope)
\begin{equation*}
f = x_1^2 + x_2^3 + (x_3^{18}+x_4^{18}+x_5^{18})x_6^6.
\end{equation*}
The class of $E$ ($x_6=0$) is $H-3D$ as expected.

The model has four phases: \CY, orbifold, \LG, and hybrid, as explored in \cite{CDFKM:I,me:navi}. Here we are considering the perestroika between the \CY\ and orbifold phases. Note that in {\em none\/} of the four phase limits do we see massless D-branes. All the subsequent examples we consider have exoflop phases, and that is where we discover the \fluffy\ objects.

\subsubsection{$\P^1\times\P^1$}  \label{sss:P1P1}

Now suppose $E=\P^1\times\P^1$. This case was also considered in \cite{AM:delP,me:modD} but here we find a more elegant and precise solution to the question of massless D-branes.  As in the previous section, we apply example \ref{eg:divisor} and theorem \ref{thm:twist_factorization} to the spherical functor $i_*: \DC(E) \to \DC(X)$ and the exceptional collection
\[ \DC(E) = \langle \O_E(-1,0),\O_E(0,-1),\O_E,\O_E(1,1)\rangle, \]
to get
\[ T_{\DC(E)} = \O_X(E) \otimes -
= T_{\O_E(-1,0)}\, T_{\O_E(0,-1)}\, T_{\O_E}\, T_{\O_E(1,1)}. \]
Again we have $T_0 = \O_X(D) \otimes -$, $T_\Delta = T_{\O_E}$, and $T_\infty=T_\Delta\,T_0$, and now $\O_E(D) = \O_E(1,1)$, so
\begin{equation*}
\begin{split}
  T_\infty^2 &= T_{\O_E}\, T_0\, T_{\O_E}\, T_0 \\
       &= T_{\O_E}\, T_{\O_E(1,1)}\, T_0^2 \\
       &= T_{\O_E(0,-1)}^{-1}\, T_{\O_E(-1,0)}^{-1}\, T_{\DC(E)}\, T_0^2 \\
       &= T_{\O_E(0,-1)}^{-1}\, T_{\O_E(-1,0)}^{-1}\, (\O_X(E+2D) \otimes -)\\
       &= T_{\O_E(0,-1)}^{-1}\, T_{\O_E(-1,0)}^{-1}\, (\O_X(H) \otimes -).
\end{split}
\end{equation*}
Tensoring with $\O_X(H)$ has no effect on objects supported on $E$. The objects $\O_E(0,-1)$ and $\O_E(-1,0)$ are orthogonal to one another, so $T_{\O_E(0,-1)}$ acts trivially on $\O_E(-1,0)$ and vice versa, and $T_{\O_E(0,-1)}$ commutes with $T_{\O_E(-1,0)}$.  Thus $\O_E(0,-1)$ and $\O_E(-1,0)$ are \fluffy:
\begin{align*}
T_\infty^2 \O_E(0,-1) &= T_{\O_E(0,-1)}^{-1} \O_E(0,-1) = \O_E(0,-1)[2] \\
T_\infty^2 \O_E(-1,0) &= T_{\O_E(-1,0)}^{-1} \O_E(-1,0) = \O_E(-1,0)[2].
\end{align*}

To recast this in terms that will be convenient in the next section, write
\begin{align*}
\A &= \langle \O_E(-1,0),\O_E(0,-1) \rangle \\
   &= \langle \O_E, \O_E(1,1) \rangle^\perp \subset \DC(E).
\end{align*}
Observe that $\A \cong \DC(\text{2 points})$.  If $F: \A \to \DC(E)$ denotes the restriction of $i_*$ to $\A$, then by theorem \ref{thm:twist_factorization}, $F$ is spherical with cotwist $S_\A[-3] = [-3]$, and the spherical twist satisfies $T_\A = T_{\O_E(-1,0)}\, T_{\O_E(0,-1)}$.  Thus by proposition \ref{prop:TF=FC} we have
\[ T_\infty^2 F = T_\A^{-1} F = F C_\A^{-1} [-1] = F S_\A^{-1} [-1]  = F[2], \]
so any object in the image of $F$ is \fluffy.  This is perhaps unimpressive since any object in $\A$ is a sum of shifts of $\O_E(0,-1)$ and $\O_E(-1,0)$, but in the next section it will become more interesting.

To see this geometry in a global example, let the matrix $Q$ be given by
\begin{equation*}
\begin{array}{c|ccccccc}&x_0&x_1&x_2&x_3&x_4&x_5&x_6\\\hline
D&-2&1&-2&0&1&1&1\\H&-4&2&1&1&0&0&0\end{array}.
\end{equation*}
The superpotential is $W=x_0f$, where (using previous conventions)
\begin{equation}
f = x_1^2 + x_2^4(x_4^{10}+x_5^{10}+x_6^{10}) +
     x_3^4(x_4^2+x_5^2+x_6^2). \label{eq:quadric}
\end{equation}
The model has five phases, two of which have $X_\Sigma$ a smooth \CY\ threefold.
One of these, which we denote $\Sigma_+$, has irrelevant ideal $B=(x_2,x_3)(x_1,x_4,x_5,x_6)$.

The other phase of interest, $\Sigma_-$, has $B=(x_0,x_2)\cap(x_2,x_3)
\cap(x_1,x_3,x_4,x_5,x_6)$. This is an exoflop phase very similar to that of section \ref{ss:qeg}. The underlying topological space of $X_\Sigma$ has two irreducible components: a $\P^1$ with homogeneous coordinates $x_0,x_2$, and the singular hypersurface in
$\P^4_{5,2,1,1,1}$ given by $f=0$ with $x_2=1$. We thus have an extremal transition to the smooth degree 10 hypersurface in $\P^4_{5,2,1,1,1}$ from this exoflop phase.

Passing from $\Sigma_+$ to $\Sigma_-$, the surface $x_2=0$ (and thus we may set $x_3=1$) in $X_\Sigma$ contracts to a point. From (\ref{eq:quadric}), this is a quadric surface, i.e., $E=\P^1\times\P^1$.

Note that $h^{1,1}(X)=2$ and $h^{2,1}(X)=144$. If $X'$ is the smooth degree 10 hypersurface after the extremal transition, then $h^{1,1}=1$ and $h^{2,1}=145$. These numbers are consistent with a typical Higgs--Coulomb transition. That is
\begin{equation}
\hbox{\#Massless D-branes} = h^{1,1}(X)-h^{1,1}(X')-h^{2,1}(X)+h^{2,1}(X')=2,
\label{eq:Higgs}
\end{equation}
as noted in \cite{me:modD}.

This case therefore works very nicely. We expect two massless D-branes from the physics perspective, and we find two $\Q$-massless objects.

\subsubsection{$dP_6$}  \label{sss:dP6}

Now suppose $E=dP_6$, a del Pezzo surface of degree 3.  This can be written as a cubic surface in $\P^3$.  Again the exceptional collection
\[ \DC(E) = \langle\O_{\ell_1}(-1),\O_{\ell_2}(-1),\ldots,\O_{\ell_6}(-1),\O_E(-2h),\O_E(-h), \O_E \rangle \]
where $\ell_i$ are the exceptional lines of the blowup $E \to \P^2$ and $h$ is the pullback of a line in $\P^2$ that misses the blown-up points, allows us to factor $\O_X(E) \otimes -$ as a product of Seidel--Thomas twists.  But we prefer to group the first eight exceptional objects together and write
\[ \DC(E) = \langle \A, \O_E \rangle. \]
In other words, $\A = (\O_E)^\perp \subset \DC(E)$.  But note that while $\A$ has an exceptional collection of length 8, it is not the derived category of 8 points, since there are Exts from left to right in the exceptional collection.  Indeed, since $E$ is a cubic surface, by theorem \ref{thm:Kuz1} we have $S_\A^3 = [4]$.

Now with the same conventions as in previous sections, we get
\begin{equation*}
\begin{split}
  T_\infty &= T_{\O_E}\, T_0 \\
    &= T_\A^{-1}\, T_{\DC(E)}\, T_0 \\
    &= T_\A^{-1}(\O_X(E+D) \otimes -)\\
    &= T_\A^{-1}(\O_X(H) \otimes -).
\end{split}
\end{equation*}
Now we claim that for any object $\a \in \A$, the pushforward $i_* \a \in \DC(X)$ is \fluffy.  Again tensoring with $\O_X(H)$ has no effect.  Recalling that $T_\A$ is the spherical twist associated to the functor $F = i_*: \A \to \DC(X)$ we apply proposition \ref{prop:TF=FC} and (\ref{eq:cotwist}) to get
\begin{equation*}
T_\infty F = T_{\A}^{-1} F = F S_\A^{-1} [2],
\end{equation*}
and so
\begin{equation*}
T_\infty^3 F = F S_\A^{-3} [6] = F \mathsf [2].
\end{equation*}
So the objects in the image of $F$ are again $\Q$-massless. If we use a local coordinate $y=z^{-1}$ near the exoflop limit point we are claiming that the entire contents of the category $\A$ has a central charge going like $y^{\frac13}$ as $y\to0$. We appear to have a considerable number of massless D-branes, including
\begin{itemize}
\item $\O_c(-1)$ and $\O_E(-c)$ for any smooth rational curve $c \subset E$, e.g.,
\begin{itemize}
\item The famous 27 lines.
\item The preimage via the blow-up map $E \to \P^2$ of a line or conic in $\P^2$.
\item The proper transform of a cubic curve in $\P^2$ with a node at one of the 6 blown-up points.
\end{itemize}
\item $\cI_p$, the ideal sheaf of any point $p \in E$.
\end{itemize}
Note that $\O_E$ itself is {\em not\/} massless. This object went massless at the discriminant point $P_\Delta$, not at $z=\infty$. The massless D-branes are not mutually local BPS states and so we would expect the physics to correspond to a four-dimensional conformal field theory. The candidate theory is described in \cite{Minahan:1996fg,Lerche:1996ni}.

An example for the global geometry was given in section \ref{ss:qeg}.
Note that the number of massless D-branes predicted by a Higgs-Coulomb
transition following (\ref{eq:Higgs}) would be 12. This is not correct
and the reason is that we have a nontrivial conformal field theory.
The actual changes in Hodge numbers associated with the collapse of a
del Pezzo surface $dP_n$ is given by the Coxeter number of $E_n$ as
observed in \cite{MV:F2,MS:five}. The relevant argument for this case
is explained in \cite{Lerche:1996ni}.

\subsubsection{$dP_7$}  \label{sss:dP7}

Now suppose $E=dP_7$, a del Pezzo surface of degree 2. This can be written as a double cover of $\P^2$ branched over a quartic curve, or equivalently a degree 4 surface in the weighted projective space $\P^3_{2,1,1,1}$.

The analysis of this is essentially identical to the case of $dP_6$, except now proposition \ref{thm:Kuz1} gives $S_{\A}^4=[6]$. Thus
\begin{equation*}
T_\infty^4 F = F S_\A^{-4} [8] = F \mathsf [2].
\end{equation*}
So all the D-branes in the image of $F$ are again $\Q$-massless and all the central charges go like $y^\frac14$.

As an example, let the matrix $Q$ be given by
\begin{equation*}
\begin{array}{c|ccccccc}&x_0&x_1&x_2&x_3&x_4&x_5&x_6\\\hline
D&-4&-1&2&1&1&0&1\\H&-2&1&0&0&0&1&0\end{array}.
\end{equation*}
The superpotential is $W=x_0f$, where (using previous conventions)
\begin{equation}
f = x_5^2(x_2^2+x_3^4+x_4^4+x_6^4) + x_1^2(x_2^3+x_3^6+x_4^6+x_6^6) \label{eq:dP7}
\end{equation}
The model has four phases. One of these, which we denote $\Sigma_+$, is a \CY\ phase with irrelevant ideal $B=(x_1,x_5)(x_2,x_3,x_4,x_6)$.
Another, which we denote $\Sigma_-$, is an exoflop phase with $B=(x_0,x_1)\cap(x_1,x_5)
\cap(x_2,x_3,x_4,x_5,x_6)$. The underlying topological space of $X_{\Sigma_-}$ has two irreducible components: a $\P^1$ with homogeneous coordinates $x_0,x_1$, and the singular hypersurface in
$\P^4_{2,1,1,1,1}$ given by $f=0$ with $x_1=1$. We thus have an extremal transition to the smooth degree 6 hypersurface in $\P^4_{2,1,1,1,1}$ from this exoflop phase.

Passing from $\Sigma_+$ to $\Sigma_-$, the surface $x_1=0$ (and thus we may set $x_5=1$) in $X_{\Sigma_+}$ contracts to a point. From \eqref{eq:dP7} we see that this surface is a $dP_7$.

\subsubsection{$dP_8$}  \label{sss:dP8}

Now suppose $E=dP_8$, a del Pezzo surface of degree 1. This can be written as a surface of degree 6 in the weighted projective space $\P^3_{3,2,1,1}$.

The analysis of this is essentially identical to the case of $dP_6$, except now we have $S_{\A}^6=[10]$, so
\begin{equation*}
T_\infty^6 F = F S_\A^{-6} [12] = F \mathsf [2].
\end{equation*}
So all the D-branes in the image of $F$ are again $\Q$-massless and all the central charges go like $y^\frac16$.

As an example, let the matrix $Q$ be given by
\begin{equation*}
\begin{array}{c|ccccccc}&x_0&x_1&x_2&x_3&x_4&x_5&x_6\\\hline
D&-6&3&0&1&1&-1&2\\H&0&1&-1&0&0&-1&1\end{array}.
\end{equation*}
The superpotential is $W=x_0f$, where (using previous conventions)
\begin{equation}
f = x_1^2x_2^2+x_1^3x_5^3+x_2^3x_6^3+x_3^6+x_4^6+x_5^6x_6^6 \label{eq:dP8}
\end{equation}
The model has six phases. One of these, which we denote $\Sigma_+$, is a \CY\ phase with $B=(x_2,x_5)(x_1,x_3,x_4,x_6)$.  Another,
which we denote $\Sigma_-$, is an exoflop phase with $B=(x_0,x_5)\cap(x_2,x_5)
\cap(x_1,x_2,x_3,x_4,x_6)$. The underlying topological space of $X_{\Sigma_-}$ has two irreducible components: a $\P^1$ with homogeneous coordinates $x_0,x_5$, and the singular hypersurface in
$\P^4_{2,1,1,1,1}$ given by $f=0$ with $x_5=1$. We thus have an extremal transition to the smooth degree 6 hypersurface in $\P^4_{2,1,1,1,1}$ from this exoflop phase.

Passing from $\Sigma_+$ to $\Sigma_-$, the surface $x_5=0$ (and thus we may set $x_2=1$) in $X_{\Sigma_+}$ contracts to a point. From \eqref{eq:dP8} we see that this surface is a $dP_8 $.

\subsubsection{$dP_5$}  \label{sss:dP5}

Something new happens when we let $E=dP_5$, a del Pezzo surface of degree 4. This cannot be written as a hypersurface in a weighted projective space, although it can be written as a complete intersection of two quadrics in $\P^4$.  The subcategory $\A = (\O_E)^\perp \subset \DC(E)$ is not a category of graded matrix factorizations, i.e.\ of D-branes in a \LG\ orbifold theory, but rather is a hybrid model over $\P^1$ with quadratic \LG\ fibre.

The subcategory $\A$ is not fractional \CY, as we now see by looking at the action of $S_\A$ on K-theory.  Let $\ell_1, \dotsc, \ell_5 \subset E$ be the exceptional lines of the blowup $E \to \P^2$ and $h$ the preimage of a line in $\P^2$ that misses the blown-up points, so $\O_E(h)$ the pullback of $\O_{\P^2}(1)$.  Then
\begin{equation} \label{eq:dP5_exc_coll}
\O_{\ell_1}(-1), \dotsc, \O_{\ell_5}(-1), \O_E(-2h), \O_E(-h)
\end{equation}
is a full exceptional collection for $\A$, hence gives a basis for $K(\A)$.  In this basis the Euler pairing is
\begin{equation*}
\chi = \left(\begin{smallmatrix}
1 & 0 & 0 & 0 & 0 & -1 & -1 \\
& 1 & 0 & 0 & 0 & -1 & -1 \\
& & 1 & 0 & 0 & -1 & -1 \\
& & & 1 & 0 & -1 & -1 \\
& & & & 1 & -1 & -1 \\
& & & & & 1 & 3 \\
& & & & & & 1 
\end{smallmatrix}\right).
\end{equation*}
To calculate the Serre functor in this basis, observe that for any $v, w \in K(\A)$ we have $\chi(v, S_\A(w)) = \chi(w, v)$, so in matrix terms $\chi \cdot S_\A = \chi^\top$, so
\begin{equation*}
S_\A = \chi^{-1} \cdot \chi^\top = \left(\begin{smallmatrix}
2 & 1 & 1 & 1 & 1 & -5 & -2 \\
1 & 2 & 1 & 1 & 1 & -5 & -2 \\
1 & 1 & 2 & 1 & 1 & -5 & -2 \\
1 & 1 & 1 & 2 & 1 & -5 & -2 \\
1 & 1 & 1 & 1 & 2 & -5 & -2 \\
2 & 2 & 2 & 2 & 2 & -8 & -3 \\
-1 & -1 & -1 & -1 & -1 & 3 & 1
\end{smallmatrix}\right)
\end{equation*}
The Jordan canonical form of this matrix is
\begin{equation}
\left(\begin{smallmatrix}
-1 & 1 \\
& -1 \\
& & 1 \\
& & & 1 \\
& & & & 1 \\
& & & & & 1 \\
& & & & & & 1
\end{smallmatrix}\right), \label{eq:jordan_form}
\end{equation}
so no power of $S_\A$ acts as $\pm 1$ on K-theory, so no power of $S_\A$ is a shift. Unlike the previous cases, not all of $\A$ can be \fluffy.

There is, however, a 6-dimensional subspace of $K(\A)$ on which $S_\A^2$ acts as the identity; we will exhibit ten objects of $\A$ whose K-theory classes span this subspace, and on which $S_\A^2$ acts a shift by 2, giving as many \fluffy\ objects on $X$ as possible.

The Jordan canonical form indicates that the remaining direction in $K(\A)$ should correspond to a central charge of the form
\[ \cz = \sqrt{y}\bigl(\log(y) + f(y)\bigr) \]
near $y=z^{-1}=0$, where $f(y)$ is a power series in $y$. Thus, this central charge also corresponds to a massless D-brane as $y\to0$. The monodromy of these ``log-massless'' D-branes can also be probed at the derived category level, as we see shortly. First we analyze the \fluffy\ objects.

\paragraph{\fluffy\ objects.}
\begin{prop}
The line bundles
\begin{equation} \label{eq:ten_bundles}
\begin{split}
\O_E(\ell_1 - h) &\qquad \O_E(\ell_2 + \ell_3 + \ell_4 + \ell_5 - 2h) \\
\O_E(\ell_2 - h) &\qquad \O_E(\ell_1 + \ell_3 + \ell_4 + \ell_5 - 2h) \\
\O_E(\ell_3 - h) &\qquad \O_E(\ell_1 + \ell_2 + \ell_4 + \ell_5 - 2h) \\
\O_E(\ell_4 - h) &\qquad \O_E(\ell_1 + \ell_2 + \ell_3 + \ell_5 - 2h) \\
\O_E(\ell_5 - h) &\qquad \O_E(\ell_1 + \ell_2 + \ell_3 + \ell_4 - 2h).
\end{split}
\end{equation}
lie in $\A$, that is, they have no cohomology.  Their classes in $K(\A)$ span the 1-eigenspace of $S_\A^2$.
\end{prop}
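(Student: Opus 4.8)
The plan is to treat the two assertions separately, in both cases exploiting the Serre-dual pairing visible in the table \eqref{eq:ten_bundles}. Write $\delta_i = \ell_i - h$ for the left column and $\delta_i' = \ell_1 + \dotsb + \ell_5 - \ell_i - 2h$ for the right column. One checks at once from the intersection form ($h^2 = 1$, $\ell_i\cdot\ell_j = -\delta_{ij}$, $h\cdot\ell_i = 0$, $K_E = -3h + \ell_1 + \dotsb + \ell_5$) the three facts that drive everything: $\delta_i + \delta_i' = K_E$, $\delta_i^2 = \delta_i'^2 = 0$, and $\delta_i\cdot K_E = \delta_i'\cdot K_E = 2$. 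First I would show no cohomology. By Riemann--Roch $\chi(\O_E(\delta_i)) = 1 + \tfrac12\delta_i(\delta_i - K_E) = 1 + \tfrac12(0-2) = 0$, and likewise for $\delta_i'$. Since $h$ is nef while $\delta_i\cdot h = -1$ and $\delta_i'\cdot h = -2$ are negative, neither class is effective, so $H^0(\O_E(\delta_i)) = H^0(\O_E(\delta_i')) = 0$. Serre duality then turns the relation $\delta_i + \delta_i' = K_E$ into $H^2(\O_E(\delta_i)) \cong H^0(\O_E(\delta_i'))^\vee = 0$ and symmetrically; combined with $\chi = 0$ this gives $H^1 = 0$ too. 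Hence all ten bundles lie in $\A = (\O_E)^\perp$.

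The heart of the matter is to show $S_\A^2$ fixes each of the ten classes in $K(\A)$. The delicate point is that $S_\A$ is \emph{not} the restriction of $-\otimes\omega_E$ to $\A$, so I would argue entirely through the Euler form, using only the defining property $\chi(x, S_\A y) = \chi(y, x)$ of the Serre functor, the identification $K(\A)\otimes\Q = \{x : \chi(\O_E, x) = 0\}$, and the nondegeneracy of $\chi$ on $K(\A)$ (the exceptional collection makes its Gram matrix unipotent). The key computation is that for \emph{any} $x$, Hirzebruch--Riemann--Roch gives
\begin{equation*}
\chi(\O_E(\delta_i), x) + \chi(\O_E(-\delta_i), x) = \int_E \bigl(e^{-\delta_i} + e^{\delta_i}\bigr)\,\ch(x)\,\td(E) = 2\chi(\O_E, x),
\end{equation*}
since $e^{\delta_i} + e^{-\delta_i} = 2 + \delta_i^2 = 2$ once $\delta_i^2 = 0$. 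On $K(\A)$ the right-hand side vanishes, so $\chi(\O_E(\delta_i), x) = -\chi(\O_E(-\delta_i), x)$. Feeding in Serre duality on $\DC(E)$, namely $\chi(\O_E(-\delta_i), x) = \chi(x, \O_E(-\delta_i)\otimes\omega_E) = \chi(x, \O_E(\delta_i'))$, yields $\chi(x, S_\A[\O_E(\delta_i)]) = \chi(\O_E(\delta_i), x) = -\chi(x, [\O_E(\delta_i')])$ for all $x \in K(\A)$, so $S_\A[\O_E(\delta_i)] = -[\O_E(\delta_i')]$. The same argument with $\delta_i$ and $\delta_i'$ exchanged gives $S_\A[\O_E(\delta_i')] = -[\O_E(\delta_i)]$, whence $S_\A^2$ fixes all ten classes, i.e.\ they lie in the $1$-eigenspace $W = \Ker(S_\A^2 - 1)$.

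It remains to see they span $W$. From the Jordan form \eqref{eq:jordan_form}, squaring converts the size-two $(-1)$-block into a single size-two unipotent block at $+1$ while leaving the five genuine $+1$ eigenvectors, so $\dim W = 6$. On the other side, because each $\delta_i, \delta_i'$ squares to zero we have $\ch\O_E(\delta) = 1 + \delta$, so the ten Chern characters are $1 + \delta_i$ and $1 + \delta_i'$; subtracting $1 + \delta_1$ from each, their span has dimension $1 + \dim\langle \delta_k - \delta_1\rangle$, and a short check in $\Pic(E)\otimes\Q$ (the differences $\ell_i - \ell_1$ span a $4$-dimensional space, and any $\delta_i' - \delta_1$ supplies the one remaining direction involving $h$) shows this is $1 + 5 = 6$. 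Since the ten classes lie in the $6$-dimensional $W$ and span a $6$-dimensional subspace, they span $W$.

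I expect the main obstacle to be precisely the identification $S_\A[\O_E(\delta_i)] = -[\O_E(\delta_i')]$: the tempting shortcut of computing $S_\A$ as $-\otimes\omega_E$ postcomposed with a mutation gives the wrong class on K-theory, so one must resist it and work through the bilinear Euler form, where the collapse $e^{\delta_i} + e^{-\delta_i} = 2$ forced by $\delta_i^2 = 0$ is what makes the argument go. As an independent sanity check one can instead expand the classes in the exceptional basis \eqref{eq:dP5_exc_coll}, using $\chi(\O_E(\delta), \O_{\ell_j}(-1)) = -\delta\cdot\ell_j$ together with Riemann--Roch for the pairings against $\O_E(-h)$ and $\O_E(-2h)$, and apply the explicit matrix $S_\A$ twice.
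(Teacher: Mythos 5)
Your proof is correct, but it takes a genuinely different route from the paper's on both halves. For membership in $\A$, the paper handles the left-hand column by exhibiting $\O_E(\ell_i-h)$ as an extension of $\O_{\ell_i}(-1)$ by $\O_E(-h)$, both of which already sit in the exceptional collection \eqref{eq:dP5_exc_coll}, and then gets the right-hand column by Serre duality; you instead compute $\chi=0$ by Riemann--Roch (using $\delta^2=0$, $\delta\cdot K_E=2$), kill $H^0$ by pairing against the nef class $h$, and kill $H^2$ by Serre duality against the partner bundle. Both are fine; the paper's extension argument is slicker for the left column but yours is uniform. The bigger divergence is on the eigenspace claim, which the paper dismisses as ``a straightforward calculation,'' meaning: expand the ten classes in the basis \eqref{eq:dP5_exc_coll} and apply the explicit $7\times 7$ matrix for $S_\A$ twice. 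Your argument replaces that matrix computation with the structural identity $\chi(\O_E(\delta_i),x)+\chi(\O_E(-\delta_i),x)=2\chi(\O_E,x)$ (forced by $\delta_i^2=0$), which together with Serre duality on $E$ and nondegeneracy of $\chi$ on $K(\A)$ yields $S_\A[\O_E(\delta_i)]=-[\O_E(\delta_i')]$ directly. This is more than the proposition asks for: it is precisely the K-theoretic shadow of the paper's subsequent Proposition \ref{prop:with_the_square} (the shift $[1]$ there becomes your sign $-1$), so your route both proves the eigenspace statement and foreshadows the categorical refinement. Your dimension count for spanning (Jordan form gives $\dim\ker(S_\A^2-1)=6$; the Chern characters $1+\delta$ visibly span a $6$-dimensional space) is a clean substitute for the paper's unstated calculation. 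One small point worth making explicit if you write this up: the nondegeneracy of $\chi$ on $K(\A)$ follows from the Gram matrix of the exceptional collection being unipotent upper-triangular, and you need $[\O_E(\delta_i')]\in K(\A)$ (from the first half) before you may conclude $S_\A[\O_E(\delta_i)]=-[\O_E(\delta_i')]$ from equality of pairings against all of $K(\A)$.
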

\begin{proof}
For the left-hand column, we have a short exact sequence
\begin{equation*}
0 \to \O_E(-h) \to \O_E(\ell_i - h) \to \O_{\ell_i}(\ell_i - h) \to 0.
\end{equation*}
Since $\O_{\ell_i}(\ell_i - h) = \O_{\ell_i}(-1)$, both ends of the short exact sequence are in the exceptional collection \eqref{eq:dP5_exc_coll}, so the middle term is in $\A$ as well.

For the right-hand column, recall that $\omega_E = \O_E(-3h + \ell_1 + \dotsb + \ell_5)$, so by Serre duality we have
\begin{equation*}
H^*(\O_E(\ell_2 + \ell_3 + \ell_4 + \ell_5 - 2h)) = H^{2-*}(\O_E(\ell_1 - h))^* = 0
\end{equation*}
and so on.

The claim about spanning the 1-eigenspace is a straightforward calculation.
\end{proof}

\begin{prop} \label{prop:with_the_square}
The Serre functor $S_\A$ acts on the line bundles \eqref{eq:ten_bundles} by exchanging the columns and shifting by 1: that is,
\begin{align*}
S_\A(\O_E(\ell_1 - h)) &= \O_E(\ell_2 + \ell_3 + \ell_4 + \ell_5 - 2h)[1] \\
S_\A(\O_E(\ell_2 + \ell_3 + \ell_4 + \ell_5 - 2h)) &= \O_E(\ell_1 - h)[1]
\end{align*}
and so on.  In particular $S_\A^2$ acts on each of the line bundles \eqref{eq:ten_bundles} as a shift by 2.
\end{prop}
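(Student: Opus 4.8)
The plan is to compute $S_\A$ by relating it to the Serre functor of the ambient category, which for the surface $E$ is simply $S_{\DC(E)} = -\otimes\omega_E[2]$. Since $\A$ is exactly the right orthogonal $\{a : \RHom(\O_E,a)=0\}$ in the decomposition $\DC(E) = \langle\A,\O_E\rangle$, the projection $\DC(E)\to\A$ is the left mutation $L_{\O_E}$ of \eqref{eq:mutation} (it is the left adjoint of the inclusion). The standard adjunction relation between the Serre functor of a semiorthogonal component and that of the ambient category then gives $S_\A^{-1}(a) = L_{\O_E}\bigl(a\otimes\omega_E^{-1}\bigr)[-2]$; this is essentially the mechanism behind Kuznetsov's functor $\OJ$ in the proof of Theorem \ref{thm:Kuz1}. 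Because only left mutation is available to us, I would prove the equivalent statement that $S_\A^{-1}$ carries each right-hand bundle of \eqref{eq:ten_bundles} to the corresponding left-hand bundle shifted by $[-1]$, and vice versa; inverting yields the stated column swap with shift $[1]$.

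Next I would feed the two representative bundles into this formula. Using $\omega_E^{-1} = \O_E(3h-\ell_1-\dotsb-\ell_5)$ one computes
\[ \O_E(\ell_2+\ell_3+\ell_4+\ell_5-2h)\otimes\omega_E^{-1} = \O_E(h-\ell_1), \qquad \O_E(\ell_1-h)\otimes\omega_E^{-1} = \O_E(2h-\ell_2-\ell_3-\ell_4-\ell_5), \]
so everything reduces to evaluating $L_{\O_E}$ on $\O_E(h-\ell_1)$ and on $\O_E(2h-\ell_2-\ell_3-\ell_4-\ell_5)$. Each of these is a base-point-free pencil of self-intersection $0$: the pencil of lines through the first blown-up point, and the pencil of conics through the other four. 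Riemann--Roch gives $\chi=2$ in each case, $H^2$ vanishes by Serre duality (the dual class has negative $h$-degree), and $H^0$ is $2$-dimensional, forcing $H^1=0$; thus $\RHom(\O_E,\O_E(D))=\C^2$ in degree $0$. The evaluation map is therefore surjective, giving a short exact sequence $0\to\O_E(-D)\to\O_E^{\oplus2}\to\O_E(D)\to0$ (the kernel being $\O_E(-D)$ by taking determinants), so $L_{\O_E}(\O_E(D)) = \O_E(-D)[1]$. After the overall $[-2]$, this returns $\O_E(\ell_1-h)[-1]$ and $\O_E(\ell_2+\ell_3+\ell_4+\ell_5-2h)[-1]$ respectively, which is exactly the claimed action of $S_\A^{-1}$.

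The full proposition then follows by the evident $S_5$ symmetry permuting $\ell_1,\dotsc,\ell_5$, since each of the ten bundles is handled by one of these two computations. Finally, writing $L_i = \O_E(\ell_i-h)$ and $M_i$ for its right-column partner, the relations $S_\A(L_i)=M_i[1]$ and $S_\A(M_i)=L_i[1]$ give $S_\A^2(L_i)=L_i[2]$ and $S_\A^2(M_i)=M_i[2]$, which is the last sentence of the proposition; this is consistent with $S_\A^2$ acting as $+1$ on the span of these classes in $K(\A)$, namely the $1$-eigenspace identified above.

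The main obstacle is the cohomological bookkeeping in the middle step: one must verify that $\RHom(\O_E,\O_E(D))$ is concentrated in degree $0$ and two-dimensional and that the pencil is genuinely base-point free, since any higher cohomology or base point would turn the mutation cone into something more complicated than a single shifted line bundle. This is precisely where the special geometry of $dP_5$ does the work --- the identities $(h-\ell_1)^2=0$ and $(2h-\ell_2-\ell_3-\ell_4-\ell_5)^2=0$ say that two general members of each pencil are disjoint on $E$, which is what makes the evaluation map surjective and the kernel a line bundle.
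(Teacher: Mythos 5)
Your proposal is correct and follows essentially the same route as the paper: both use $S_\A^{-1} = L_{\O_E}\circ S_E^{-1}$, tensor by $\omega_E^{-1}$, and evaluate the left mutation via a surjection $\O_E^{\oplus 2}\to\O_E(D)$ whose kernel is the dual line bundle. The only cosmetic difference is that the paper writes the relevant classes as $\ell_2+\gamma_1$ and $\ell_1+\gamma_2$ (with $\gamma_1,\gamma_2$ the transforms of the line through $p_1,p_2$ and the conic through all five points) and gets global generation by contracting $\gamma_2$, whereas you identify the same divisor classes $h-\ell_1$ and $2h-\ell_2-\ell_3-\ell_4-\ell_5$ as base-point-free pencils of lines through $p_1$ and conics through the other four points.
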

\begin{proof}
Let $p_1, \dotsc, p_5 \in \P^2$ be the five points blown up by the map $E \to \P^2$.  Let $\gamma_1 \subset E$ be the proper transform of the line joining $p_1$ and $p_2$, and $\gamma_2 \subset E$ the proper transform of the unique conic in $\P^2$ passing through all five points.  As divisor classes on $E$ we have
\begin{align*}
h &= \gamma_1 + \ell_1 + \ell_2 \\
2h &= \gamma_2 + \ell_1 + \dotsb + \ell_5.
\end{align*}
The lines $\ell_1$, $\ell_2$, $\gamma_1$, and $\gamma_2$ intersect in a square as shown:
\begin{equation}
\begin{tikzpicture}[scale=0.8]
  \draw (0,1) -- (3,1) node[anchor=west]{$\gamma_1$};
  \draw (0,0) -- (3,0) node[anchor=west]{$\gamma_2$};
  \draw (1,-1) -- (1,2) node[anchor=south]{$\ell_1$};
  \draw (2,-1) -- (2,2) node[anchor=south]{$\ell_2$};
  \filldraw (2,0) circle (0.06) node[anchor=north west]{$q$};
  \filldraw (1,1) circle (0.06) node[anchor=south east]{$p$};
\end{tikzpicture} \label{eq:LM}
\end{equation}

Now the two line bundles in question are
\begin{gather*}
\O_E(\ell_1 - h) = \O_E(-\ell_2 - \gamma_1) \\
\O_E(\ell_2 + \ell_3 + \ell_4 + \ell_5 - 2h) = \O_E(-\ell_1 - \gamma_2).
\end{gather*}
First we claim that the dual line bundles are globally generated and have $h^0 = 2$.  For definiteness we argue with $\O_E(\ell_1 + \gamma_2)$.  Since $\gamma_2$ has self-intersection $-1$ we can contract it; then $\ell_1$ becomes a line of self-intersection 0 and hence the fiber of a map to $\P^1$.  Thus $\O_E(\ell_1 + \gamma_2)$ is the pullback of $\O_{\P^1}(1)$, hence is globally generated by two sections.

Next we claim that $S_\A^{-1}(\O_E(-\ell_2 - \gamma_1)) = \O_E(-\ell_1 - \gamma_2)[-1]$.  We have $S_\A^{-1} = L_{\O_E} \circ S_E^{-1}$, where $L_{\O_E}$ is left mutation past $\O_E$ as in \eqref{eq:mutation}.  Since
\begin{align*}
\omega_E &= \O_E(-3h + \ell_1 + \dotsb + \ell_5) \\
&= \O_E(-\ell_1 - \ell_2 - \gamma_1 - \gamma_2),
\end{align*}
we see that
\begin{equation*}
S_E^{-1}(\O_E(-\ell_2 - \gamma_1)) = \O_E(\ell_1 + \gamma_2)[-2].
\end{equation*}
Since $\O_E(\ell_1 + \gamma_2)$ is globally generated by two sections, the evaluation map
\begin{equation*}
\RHom(\O_E, \O_E(\ell_1 + \gamma_2)) \otimes \O_E \to \O_E(\ell_1 + \gamma_2)
\end{equation*}
is a surjection $\O_E^2 \to \O_E(\ell_1 + \gamma_2)$, so the kernel is a line bundle; taking first Chern classes we see that it must be $\O_E(-\ell_1 - \gamma_2)$.  Thus
\begin{equation*}
L_{\O_E} \O_E(\ell_1 + \gamma_2)[-2] = \O_E(-\ell_1 - \gamma_2)[-1],
\end{equation*}
as claimed.

Similarly we find that $S_\A^{-1}(\O_E(-\ell_1 - \gamma_2)) = \O_E(-\ell_2 - \gamma_1)[-1]$.
\end{proof}

\paragraph{Log-massless objects.}
Having accounted for the \fluffy\ objects, we now turn to the remaining direction in $K(\A)$.  From the the non-trivial Jordan block of \eqref{eq:jordan_form} we see that it is spanned by a class $v$ with
\begin{equation}
S_\A(v) = -v + w \label{eq:jordan-blocked}
\end{equation}
for some $(-1)$-eigenvector $w$ of $S_\A$.  Indeed $v = [\cI_\text{point}]$ is such a class, and we can upgrade \eqref{eq:jordan-blocked} from K-theory to the derived category as follows:

\begin{prop}
Let $\gamma_1$ and $\gamma_2$ be the exceptional lines appearing in the proof of proposition \ref{prop:with_the_square}, and consider the points $p = \ell_1 \cap \gamma_1$ and $q = \ell_2 \cap \gamma_2$.  Then there is an exact triangle
\[ S_\A(\cI_p) \to \O_E(-\ell_1 - \gamma_2)[2] \oplus \O_E(-\ell_2 - \gamma_1)[2] \to \cI_q[2] \]
and similarly
\[ S_\A(\cI_q) \to \O_E(-\ell_1 - \gamma_2)[2] \oplus \O_E(-\ell_2 - \gamma_1)[2] \to \cI_p[2]. \]
Note that $S_\A$ acts on the middle terms of these triangles as a shift by 1.
\end{prop}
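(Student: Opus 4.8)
The plan is to realize $S_\A(\cI_p)$ explicitly as a two-term complex and then read off the triangle. Write $L_1 = \O_E(-\ell_1-\gamma_2)$, $L_2 = \O_E(-\ell_2-\gamma_1)$, $D_1 = \ell_1+\gamma_2$ and $D_2 = \ell_2+\gamma_1$, so that $L_i = \O_E(-D_i)$ is the ideal sheaf of the reducible anticanonical ``half'' $D_i$. Since both reducible conics $D_1$ and $D_2$ pass through $q=\ell_2\cap\gamma_2$ (indeed $q\in\gamma_2\subset D_1$ and $q\in\ell_2\subset D_2$), the defining sections give canonical maps $\phi_i\colon L_i\hookrightarrow\cI_q$, which I assemble into $\phi = (\phi_1,\phi_2)\colon L_1\oplus L_2\to\cI_q$. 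The asserted triangle will be nothing but the rotation of the defining triangle of $\Cone(\phi)$, once I show $S_\A(\cI_p)\cong\Cone(\phi)[1]$. The closing remark that $S_\A$ acts on the middle term as a shift by $1$ is then immediate from Proposition~\ref{prop:with_the_square}, which gives $S_\A(L_1)=L_2[1]$ and $S_\A(L_2)=L_1[1]$, so $S_\A(L_1\oplus L_2)=(L_1\oplus L_2)[1]$.

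First I would pin down $\Cone(\phi)$ by intersection theory on the square \eqref{eq:LM}. Because $D_1$ and $D_2$ share no component, $\ker\phi = \cI_{D_1}\cap\cI_{D_2} = \O_E(-D_1-D_2) = \omega_E$, using $D_1+D_2 = \ell_1+\ell_2+\gamma_1+\gamma_2 = -K_E$. The scheme $D_1\cap D_2$ has length $D_1\cdot D_2 = 2$, and the incidences $\ell_1\cdot\gamma_1 = \gamma_2\cdot\ell_2 = 1$, $\ell_1\cdot\ell_2 = \gamma_1\cdot\gamma_2 = 0$ identify it as the reduced pair $\{p,q\}$; hence $\Image\phi = \cI_{D_1}+\cI_{D_2} = \cI_{\{p,q\}}\subseteq\cI_q$ and $\coker\phi = \cI_q/\cI_{\{p,q\}} = \O_p$. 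This is exactly a four-term exact sequence
\[ 0\to\omega_E\to L_1\oplus L_2\xrightarrow{\ \phi\ }\cI_q\to\O_p\to0, \]
so $\Cone(\phi)$ has cohomology sheaves $\omega_E$ in degree $-1$ and $\O_p$ in degree $0$.

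The heart of the argument is to verify that $G:=\Cone(\phi)[1]$ really equals $S_\A(\cI_p)$, which I would do by applying the inverse Serre functor through $S_\A^{-1}=L_{\O_E}\circ S_E^{-1}$ with $S_E^{-1}=-\otimes\omega_E^{-1}[-2]$, and checking $S_\A^{-1}(G)\cong\cI_p$. Since $G$ has $\omega_E$ in degree $-2$ and $\O_p$ in degree $-1$, tensoring by $\omega_E^{-1}[-2]$ shows that $S_E^{-1}(G)$ has $\O_E$ in degree $0$ and $\O_p$ in degree $1$; its canonical truncation triangle is $\O_E\to S_E^{-1}(G)\to\O_p[-1]$. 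Applying the exact functor $L_{\O_E}$ and using $L_{\O_E}(\O_E)=0$ together with $L_{\O_E}(\O_p)=\cI_p[1]$, the triangle collapses to $0\to S_\A^{-1}(G)\to\cI_p$, giving $S_\A^{-1}(G)\cong\cI_p$ and hence $G\cong S_\A(\cI_p)$. It is worth noting that this step is clean precisely because $L_{\O_E}$ annihilates the $\O_E$ summand and sends $\O_p$ to $\cI_p[1]$, so the extension class of $S_E^{-1}(G)$ — split or not — never enters the computation.

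I expect the only real friction to be the bookkeeping of the first step: checking that the two points of $D_1\cap D_2$ are $p$ and $q$, and that $\phi$ has the stated kernel, image and cokernel. This is routine once the intersection numbers above are in hand, and once one notes $\cI_p\in\A$ (so that $S_\A(\cI_p)$ is defined), which follows from $\RHom(\O_E,\cI_p)=0$. The second triangle follows from the evident symmetry of the square exchanging $(\ell_1,\gamma_1)\leftrightarrow(\ell_2,\gamma_2)$, hence $p\leftrightarrow q$ and $L_1\leftrightarrow L_2$: the identical construction with $\cI_p$ in place of $\cI_q$ yields $0\to\omega_E\to L_1\oplus L_2\to\cI_p\to\O_q\to0$ and, by the same functor computation, the corresponding triangle for $S_\A(\cI_q)$.
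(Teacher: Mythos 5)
Your proof is correct, and it rests on the same ingredients as the paper's: the four\-/term exact sequence $0\to\omega_E\to\O_E(-\ell_1-\gamma_2)\oplus\O_E(-\ell_2-\gamma_1)\to\cI_q\to\O_p\to0$ (which the paper obtains by splicing the Mayer--Vietoris sequence for $\ell_1\cup\gamma_2$ and $\ell_2\cup\gamma_1$ with $0\to\cI_{p\cup q}\to\cI_q\to\O_p\to0$), the formula $S_\A^{-1}=L_{\O_E}\circ S_E^{-1}$, and the facts $L_{\O_E}(\O_E)=0$ and $L_{\O_E}(\O_p)=\cI_p[1]$. The difference is one of direction: the paper applies $L_{\O_E}(-\otimes\omega_E^{-1})$ to the triangle containing $\cI_q$ and reads off the answer, whereas you build the candidate $G=\Cone(\phi)[1]$, identify it by its cohomology sheaves, and verify $S_\A^{-1}(G)\cong\cI_p$ via a truncation triangle --- a perfectly valid reorganization that buys a slightly more transparent description of $S_\A(\cI_p)$ as an explicit two\-/term complex. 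The one step you leave implicit is that $G\in\A$: this is needed to upgrade $L_{\O_E}S_E^{-1}(G)\cong\cI_p$ to $G\cong S_\A(\cI_p)$ (since $L_{\O_E}S_E^{-1}$ agrees with $S_\A^{-1}$ only on $\A$, and always lands in $\A$ regardless of its input), but it is immediate because $\O_E(-\ell_1-\gamma_2)$, $\O_E(-\ell_2-\gamma_1)$, and $\cI_q$ all lie in $\A$ and $\A$ is closed under cones.
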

\begin{proof}
Apply $L_{\O_E}(- \otimes \omega_E^{-1})$ to the exact sequence
\[ 0 \to \cI_{p \cup q} \to \cI_q \to \O_p \to 0 \]
to get an exact triangle
\begin{equation} \label{eq:intermediate_triangle}
L_{\O_E}(\cI_{p \cup q} \otimes \omega_E^{-1}) \to S_\A^{-1} \cI_q[2] \to \cI_p[1].
\end{equation}
For the left-hand term, take the Mayer--Vietoris sequence
\[ 0 \to \cI_{\ell_1 \cup \ell_2 \cup \gamma_1 \cup \gamma_2}
\to \cI_{\ell_1 \cup \gamma_2} \oplus \cI_{\ell_2 \cup \gamma_1}
\to \cI_{(\ell_1 \cup \gamma_2) \cap (\ell_2 \cup \gamma_1)} \to 0 \]
and write it as
\[ 0 \to \omega_E \to \O_E(-\ell_1 - \gamma_2) \oplus \O_E(-\ell_2 - \gamma_1) \to \cI_{p \cup q} \to 0; \]
then $L_{\O_E}(- \otimes \omega_E^{-1})$ annihilates the first term, and as we saw in the proof of proposition \ref{prop:with_the_square} it acts on the second term as a shift by 1, so we have
\[ L_{\O_E}(\cI_{p \cup q} \otimes \omega_E^{-1}) = \O_E(-\ell_1 - \gamma_2)[1] \oplus \O_E(-\ell_2 - \gamma_1)[1]. \]
Substitute this into \eqref{eq:intermediate_triangle} and rotate the triangle to get
\[ \cI_p \to \O_E(-\ell_1 - \gamma_2)[1] \oplus \O_E(-\ell_2 - \gamma_1)[1] \to S_\A^{-1} \cI_q[2]. \]
Now all three objects are in $\A$; apply $S_\A$ to get the desired result.
\end{proof}

\paragraph{Example.}
Finally we give an example.  Let the matrix $Q$ be given by
\begin{equation*}
\begin{array}{c|ccccccccc}&x_0&x_1&x_2&x_3&x_4&x_5&x_6&x_7&x_8\\\hline
D&-2&-2&-1&0&1&1&1&1&1\\H&-1&-1&1&1&0&0&0&0&0\end{array}.
\end{equation*}
The superpotential is $W=x_0f_0+x_1f_1$, where both $f_0$ and $f_1$ are of the form (using previous conventions)
\begin{equation*}
x_2(x_4^3+x_5^3+x_6^3+x_7^3+x_8^3) + x_3(x_4^2+x_5^2+x_6^2+x_7^2+x_8^2) 
\end{equation*}
The model has four phases. One of these, which we denote $\Sigma_+$, is a \CY\ phase with irrelevant ideal $B=(x_2,x_3)(x_4,x_5,x_6,x_7,x_8)$.
The Hodge numbers are $h^{1,1}=2$ and $h^{2,1}=66$.  Another phase, which we denote $\Sigma_-$, is an exoflop with $B=(x_2,x_3)\cap(x_0,x_1,x_2)\cap(x_3,x_4,x_5,x_6,x_7,x_8)$. To pass from $\Sigma_+$ to $\Sigma_-$ we shrink down the surface $x_2=0$ which is $dP_5$. The underlying topological space of $X_{\Sigma_-}$ has two irreducible components: a $\P^2_{\{2,2,1\}}$ with homogeneous coordinates $x_0,x_1,x_2$, and a singular complete intersection of two cubics in
$\P^5$ given by $f_0=f_1=0$ with $x_2=1$. We thus have an extremal transition to the bicubic in $\P^5$ from this exoflop phase.

\subsubsection{$dP_4$ and lower}  \label{sss:dP4}

Repeating the K-theory analysis of the last section for all $dP_n$, we find that the characteristic polynomial of $S_\A$ is
\begin{align*}
dP_1: &\qquad (\lambda-1)(\lambda^2 + 6\lambda + 1) \\
dP_2: &\qquad (\lambda-1)^2(\lambda^2 + 5\lambda + 1) \\
dP_3: &\qquad (\lambda-1)^3(\lambda^2 + 4\lambda + 1) \\
dP_4: &\qquad (\lambda-1)^4(\lambda^2 + 3\lambda + 1) \\
dP_5: &\qquad (\lambda-1)^5(\lambda^2 + 2\lambda + 1) \\
dP_6: &\qquad (\lambda-1)^6(\lambda^2 + \lambda + 1) \\
dP_7: &\qquad (\lambda-1)^7(\lambda^2 + 1) \\
dP_8: &\qquad (\lambda-1)^8(\lambda^2 - \lambda + 1)
\end{align*}
Thus for $n \le 4$ we see that $S_\A$ has two negative real eigenvalues, one with $\lambda < -1$ and one with $-1 < \lambda < 0$, so $\A$ is even further from being fractional \CY.  Despite the large 1-eigenspace in K-theory, there does not seem to be any object of $\A$ on which $S_\A$ acts a shift.

For $dP_6$, the eigenvalues are 1 and the primitive cube roots of unity, reflecting the fact that $S_\A^3 = [4]$.  Moreover, despite having a 6-dimensional 1-eigenspace in K-theory, there is no object of $\A$ on which $S_\A$ acts a shift, for this would contradict $S_\A^3 = [4]$.

For $dP_7$ the eigenvalues are 1 and $\pm i$, reflecting the fact that $S_\A^4 = [6]$.  But note that $S_\A^2 \ne [3]$, since in K-theory $S_\A^2$ has both 1 and $-1$ as eigenvalues; that is, $\A$ is fractional \CY\ of dimension $6/4$ but not $3/2$.

For $dP_8$ the eigenvalues are 1 and the primitive sixth roots of unity, reflecting the fact that $S_\A^6 = [10]$, and again $S_\A^3 \ne [5]$.

\subsection{Other Contractions}

It is interesting to view other well-known extremal transitions in terms of exoflops and spherical functors. We will see that \fluffy ness coincides with known results for massless D-branes. 

\subsubsection{$dP_{10}$} \label{sss:dP10}

As an example, let the matrix $Q$ be given by
\begin{equation*}
\begin{array}{c|ccccccc}&x_0&x_1&x_2&x_3&x_4&x_5&x_6\\\hline
D&-2&-1&0&0&1&1&1\\H&-3&1&1&1&0&0&0\end{array}.
\end{equation*}
The superpotential is $W=x_0f$, where (using previous conventions)
\begin{equation*}
x_1^3(x_4^5+x_5^5+x_6^5) + (x_2^3+x_3^3)(x_4^2+x_5^2+x_6^2)
\end{equation*}
The model has four phases. The only \CY\ phase, which we denote $\Sigma_+$, has irrelevant ideal $B=(x_1,x_2,x_3)(x_4,x_5,x_6)$.
The Hodge numbers are $h^{1,1}=2$ and $h^{2,1}=86$.
An exoflop phase,  which we denote $\Sigma_-$, has $B=(x_0,x_1)\cap(x_1,x_2,x_3)\cap(x_2,x_3,x_4,x_5,x_6)$. 
To pass from $\Sigma_+$ to $\Sigma_-$ we shrink down the surface $x_1=0$, which we call $E$.

This $E$ is a surface of bidegree $(3,2)$ in $\P^1_{[x_2,x_3]} \times \P^2_{[x_4,x_5,x_6]}$, hence is a conic bundle over $\P^1$ with nine singular fibres as shown in figure \ref{fig:dP10}: the general fibre is a $\P^1$ of self-intersection 0, but over 9 points in the base $\P^1$ the fibre splits into two $\P^1$'s of self-intersection $-1$, which we call $\ell_i$ and $\gamma_i$.  In fact $E$ is $dP_{10}$.\footnote{Sections of this conic fibration may be found by setting $x_4,x_5,x_6$ as quadric functions of $x_2$ and $x_3$. This gives a section missing all the $\ell_i$ and having self-intersection $-1$.}  Passing from $\Sigma_+$ to $\Sigma_-$ collapses $E$ to the base $\P^1$.

\begin{figure}
\begin{center}
\begin{tikzpicture}[scale=0.7]
\foreach \x in {-6,-5,-3,-1,1,3,4}
  \draw (\x,-1) .. controls (\x-0.3,-0.5) and (\x+0.3,0.5) .. (\x,1);
\foreach \x / \y in {-4/1,-2/2,2/9}
{
  \draw (\x,-1) .. controls (\x,-0.3) .. (\x+0.2,0.2);
  \draw (\x,-1) node[anchor=north]{$\scriptstyle\gamma_{\y}$};
  \draw (\x,1) node[anchor=south]{$\scriptstyle\ell_{\y}$};
  \draw (\x,1) .. controls (\x,0.3) .. (\x+0.2,-0.2);
}
\draw (0,0) node {$\cdots$};
\draw (-1,-2) node {$\downarrow$};
\draw (-6,-3) -- (4,-3);
\end{tikzpicture}
\end{center}
\caption{$dP_{10}$} \label{fig:dP10}
\end{figure}

A physics analysis along the lines of \cite{AKM:lcy} implies we have 9 hypermultiplets in the fundamental representation of $\SU(2)$ given by the 9 degenerate conic fibers. We would like to do better than this and give the precise objects in $\DC(X)$ corresponding to these D-branes. We will show that the following 18 objects are indeed \fluffy:
\begin{equation} \label{eq:18_massless_branes}
\begin{split}
\O_{\ell_1}(-1), &\dotsc, \O_{\ell_9}(-1), \\
\O_{\gamma_1}(-1), &\dotsc, \O_{\gamma_9}(-1).
\end{split}
\end{equation}

As usual we have $E = H-D \in \Pic(X)$ and $T_0 = \O_X(D) \otimes -$.  But now $T_\Delta$ is the Horja twist associated to $i_* q^*: \DC(\P^1) \to \DC(X)$, where $q: E \to \P^1$ is the conic bundle.  Write
\begin{align*}
\DC(E) &= \langle \A, q^* \DC(\P^1) \rangle \\
&= \langle \A, \O_E, \O_E(f) \rangle
\end{align*}
where $f$ is the divisor class of a conic fiber.  Then
\begin{align*}
T_\infty &= T_\Delta T_0 \\
&= T_{q^* \DC(\P^1)}(\O_X(D) \otimes -) \\
&= T_\A^{-1} T_{\DC(E)} (\O_X(D) \otimes -) \\
&= T_\A^{-1} (\O_X(E+D) \otimes -) \\
&= T_\A^{-1} (\O_X(H) \otimes -).
\end{align*}
But beware that $H|_E = f$, whereas in previous examples we had $H|_E = 0$.

\begin{prop}
$T_\infty$ acts on the sheaves \eqref{eq:18_massless_branes} by exchanging the rows and shifting by 1: that is,
\begin{align}
T_\infty \O_{\ell_i}(-1) &= \O_{\gamma_i}(-1)[1] \label{eq:going_to_prove} \\
T_\infty \O_{\gamma_i}(-1) &= \O_{\ell_i}(-1)[1] \label{eq:not_going_to_prove}
\end{align}
In particular $T_\infty^2$ acts on each of the sheaves \eqref{eq:18_massless_branes} as a shift by 2.
\end{prop}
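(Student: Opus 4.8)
The plan is to reduce everything to a computation of $S_\A^{-1}$ on the sheaves \eqref{eq:18_massless_branes} and feed it into the formula $T_\infty = T_\A^{-1}(\O_X(H)\otimes-)$ established just above the proposition. First I would check that $\O_{\ell_i}(-1)$ and $\O_{\gamma_i}(-1)$ really lie in $\A$: since $f\cdot\ell_i = (\ell_i+\gamma_i)\cdot\ell_i = 0$, both $\RHom(\O_E,\O_{\ell_i}(-1))$ and $\RHom(\O_E(f),\O_{\ell_i}(-1))$ compute $H^\bullet(\P^1,\O(-1)) = 0$, and likewise for $\gamma_i$. The same numerology shows that tensoring by $\O_X(H)$ is harmless here: by the projection formula $\O_X(H)\otimes i_*\O_{\ell_i}(-1) = i_*(\O_{\ell_i}(-1)\otimes\O_E(f))$ and $\O_E(f)|_{\ell_i} = \O_{\ell_i}(f\cdot\ell_i) = \O_{\ell_i}$. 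Hence on these objects $T_\infty = T_\A^{-1}$; and since $F = i_*|_\A$ is spherical with cotwist $C = S_\A[-3]$ by \eqref{eq:cotwist}, Proposition \ref{prop:TF=FC} gives $T_\A^{-1}F \cong F S_\A^{-1}[2]$. So it suffices to prove $S_\A^{-1}\O_{\ell_i}(-1) = \O_{\gamma_i}(-1)[-1]$, which yields $T_\infty\O_{\ell_i}(-1) = \O_{\gamma_i}(-1)[1]$.

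To compute $S_\A^{-1}$ I would mimic the $dP_5$ argument of Proposition \ref{prop:with_the_square}, using $S_\A^{-1} = L_{\O_E}\circ L_{\O_E(f)}\circ S_E^{-1}$ — the inverse Serre functor of $E$ followed by left mutation past the rank-two block $q^*\DC(\P^1) = \langle\O_E,\O_E(f)\rangle$ that cuts $\A$ out of $\DC(E)$. The first step is elementary: since $\omega_E^{-1}|_{\ell_i} = \O_{\ell_i}(-K_E\cdot\ell_i) = \O_{\ell_i}(1)$ (as $\ell_i$ is a $(-1)$-curve), $S_E^{-1}\O_{\ell_i}(-1) = \O_{\ell_i}(-1)\otimes\omega_E^{-1}[-2] = \O_{\ell_i}[-2]$. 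Next, $\RHom(\O_E(f),\O_{\ell_i}) = \C$ (again because $f\cdot\ell_i = 0$), and the mutation triangle is the restriction sequence $0\to\O_E(f-\ell_i)\to\O_E(f)\to\O_{\ell_i}\to 0$; since $f-\ell_i = \gamma_i$ this gives $L_{\O_E(f)}\O_{\ell_i} = \O_E(\gamma_i)[1]$. Finally $\RHom(\O_E,\O_E(\gamma_i)) = \C$ (Riemann--Roch and vanishing give $h^0 = 1$, $h^1 = h^2 = 0$ for the $(-1)$-curve $\gamma_i$), and the section sequence $0\to\O_E\to\O_E(\gamma_i)\to\O_{\gamma_i}(-1)\to 0$ (using $\gamma_i^2 = -1$) gives $L_{\O_E}\O_E(\gamma_i) = \O_{\gamma_i}(-1)$. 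Stringing these together, $S_\A^{-1}\O_{\ell_i}(-1) = \O_{\gamma_i}(-1)[-1]$, exactly as needed; \eqref{eq:not_going_to_prove} follows by the symmetry $\ell_i\leftrightarrow\gamma_i$, and the ``in particular'' statement is then immediate from $T_\infty^2\O_{\ell_i}(-1) = T_\infty(\O_{\gamma_i}(-1)[1]) = \O_{\ell_i}(-1)[2]$.

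The main obstacle I anticipate is justifying the mutation formula $S_\A^{-1} = L_{\O_E} L_{\O_E(f)} S_E^{-1}$, and in particular getting the order of the two left mutations right: one must mutate past $\O_E(f)$ (the rightmost factor of the block) before $\O_E$, and verify that $L_{\O_E}$ preserves $\O_E(f)^\perp$, which holds since $\RHom(\O_E(f),\O_E) = H^\bullet(\P^1,\O(-1)) = 0$. This is the faithful analogue of $S_\A^{-1} = L_{\O_E} S_E^{-1}$ from the $dP_5$ case, now with the rank-two block $q^*\DC(\P^1)$ in place of the single object $\O_E$; the rest is the three short exact sequences above, each routine once the relevant $\RHom$ groups are seen to be one-dimensional.
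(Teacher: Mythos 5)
Your proposal is correct and follows essentially the same route as the paper: reduce $T_\infty$ to $T_\A^{-1}$ using $H|_E=f$ and $f\cdot\ell_i=0$, apply Proposition \ref{prop:TF=FC} to get $T_\A^{-1}F = FS_\A^{-1}[2]$, and compute $S_\A^{-1} = L_{\O_E}L_{\O_E(f)}S_E^{-1}$ via the adjunction formula and the same two short exact sequences, landing on $\O_{\gamma_i}(-1)$ with the same shifts. The only difference is that you spell out a few details the paper leaves implicit (membership of $\O_{\ell_i}(-1)$ in $\A$ and the exceptionality of the pair $\O_E,\O_E(f)$), which is fine.
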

\begin{proof}
We prove \eqref{eq:going_to_prove}; the proof of \eqref{eq:not_going_to_prove} is entirely similar. We include the usually implicit $i_*$ for clarity. First of all we have
\[ \O_X(H) \otimes i_* \O_{\ell_i}(-1) = i_*(\O_E(f) \otimes \O_{\ell_i}(-1)) = i_* \O_{\ell_i}(-1). \]
Since $\O_{\ell_i}(-1) \in \A$, we have
\begin{align*}
T_\A^{-1} i_* \O_{\ell_i}(-1)
&= i_* S_\A^{-1} \O_{\ell_i}(-1)[2] \\
&= i_* L_{\O_E} L_{\O_E(f)} S_E^{-1} \O_{\ell_i}(-1)[2] \\
&= i_* L_{\O_E} L_{\O_E(f)}(\O_{\ell_i}(-1) \otimes \omega_E^{-1}) \\
&= i_* L_{\O_E} L_{\O_E(f)} \O_{\ell_i}
\end{align*}
where in the last step we used the adjunction formula.  To compute the mutations, first we have
\[ \RHom_E(\O_E(f), \O_{\ell_i}) = \RHom_E(\O_E, \O_{\ell_i}) = \C, \]
so
\[ L_{\O_E(f)} \O_{\ell_i} = \Cone(\O_E(f) \to \O_{\ell_i}) = \O_E(f-\ell_i)[1] = \O_E(\gamma_i)[1]. \]
Next, from the exact sequence
\[ 0 \to \O_E \to \O_E(\gamma_i) \to \O_{\gamma_i}(-1) \to 0 \]
we see that
\[ \RHom_E(\O_E, \O_E(\gamma_i)) = \C \]
and so
\[ L_{\O_E} \O_E(\gamma_i)[1] = \Cone(\O_E[1] \to \O_E(\gamma_i)[1]) = \O_{\gamma_i}(-1)[1] \]
as claimed.
\end{proof}

The underlying topological space of $X_{\Sigma_-}$ has two irreducible components: a $\P^1\times\P^1$ with homogeneous coordinates $x_0,x_1$ and $x_2,x_3$, and a singular quintic in
$\P^4$. The exoflop phase gives an extremal transition to the quintic threefold. 

Note that the 18 objects (\ref{eq:18_massless_branes}) do not generate a category of 18 distinct points. In particular we have $\Ext^1(\O_{\ell_i},\O_{\gamma_i})=\C$. It is natural to ask, therefore, whether this extension also corresponds to a massless D-brane. Actually it does not because the K-theory classes of $\O_{\ell_i}$ and $\O_{\gamma_i}$ are identical in $X$ and thus their ``slope'' $\xi$'s are identical. By the rules of {\em polystability\/}, the only stable object given by this extension would be the trivial $\O_{\ell_i}\oplus\O_{\gamma_i}$. 

Connecting with physics, we note the change in Hodge numbers from 18 massless D-branes is again consistent with the Higgs mechanism since the $\SU(2)$ breaking needs to ``eat'' 3 massless D-branes. Also note that the $\SU(2)$ gauge symmetry actually appears at the discriminant point $P_\Delta$ in figure \ref{fig:mon1} rather than $z=\infty$. In fact, a local analysis zooming in on $P_\Delta$ is identical to that of \cite{KV:N=2} and we get an $\SU(2)$ theory with {\em no\/} flavours. The 9 flavours we have have nonzero, but identical, ``bare'' mass parameters, $m_i$, and all become massless at $z=\infty$ away from the enhanced $\SU(2)$ symmetry.

\subsubsection{$dP_{16}$} \label{sss:dP16}

Last we consider the case that $E$ is $\P^2$ blown up at $n$ points (possibly more than 8, taking it out of the class of del Pezzo surfaces) and the perestroika consists of the blowing down of this surface back to $\P^2$.

So 
\begin{equation*}
\DC(E) = \langle\O_{\ell_1}(-1),\O_{\ell_2}(-1),\ldots,\O_{\ell_n}(-1),\DC(\P^2)\rangle.
\end{equation*}
According to the EZ-transformation picture, $i_*q^*\DC(Z)=\DC(\P^2)$. 
This is the monodromy in the wall separating the phases. So the monodromy around the limit at $z=\infty$ is given by the remainder:
\begin{equation*}
  \A = \langle \O_{\ell_1}(-1),\O_{\ell_2}(-1),\ldots,\O_{\ell_n}(-1)\rangle.\end{equation*}
These objects are all orthogonal to one another so we have the derived category of $n$ distinct points. The situation is similar to section \ref{sss:P1P1} and we have $n$ \fluffy\ objects in this limit.

As an example, let the matrix $Q$ be given by
\begin{equation*}
\begin{array}{c|ccccccc}&x_0&x_1&x_2&x_3&x_4&x_5&x_6\\\hline
D&-1&0&0&0&1&1&-1\\H&-5&1&1&1&1&1&0\end{array}.
\end{equation*}
The superpotential is $W=x_0f$, where $f$ is of the form (using previous conventions)
\begin{equation*}
(x_4+x_5)(x_1^4+x_2^4+x_3^4) + x_6^4(x_4^5+x_5^5)
\end{equation*}
The model has four phases. The only \CY\ phase, which we denote $\Sigma_+$, has irrelevant ideal $B=(x_4,x_5)(x_1,x_2,x_3,x_6)$.
The Hodge numbers are again $h^{1,1}=2$ and $h^{2,1}=86$.  Another phase, which we denote $\Sigma_-$, is an exoflop with $B=(x_0,x_6)\cap(x_1,x_2,x_3,x_6)\cap(x_1,x_2,x_3,x_4,x_5)$. 

To pass from $\Sigma_+$ to $\Sigma_-$ we shrink down the surface $x_6=0$ which we denote $E$.  This $E$ maps to a $\P^2$ with homogeneous coordinates $x_1,x_2,x_3$, where the fibre is generically a point. At 16 points, however, the fibre is a full $\P^1$. Thus $E$ is $dP_{16}$. 

The underlying topological space of $X_{\Sigma_-}$ has 17 irreducible components: 16 consist of $\P^1$ with homogeneous coordinates $x_0,x_6$, and the final component is a singular quintic in $\P^4$. This quintic has 16 distinct singular points, and at each one we have a $\P^1$ sticking out:
\begin{center}
\begin{tikzpicture}[scale=0.7]
\foreach \x in {0,22.5,45,67.5,90,112.5,135,157.5,180,202.5,225,247.5,
                270,292.5,315,337.5}
{
  \draw[thick,color=red!60!black] (\x:2) -- (\x:2.5);  
  \draw (\x:2) .. controls (\x:1.9) and (\x+8:1.9) .. (\x+11.25:1.87);
  \draw (\x:2) .. controls (\x:1.9) and (\x-8:1.9) .. (\x-11.25:1.87);
}
\draw (0,0) node{\tiny Singular Quintic};
\end{tikzpicture}
\end{center}
 We again have an extremal transition to the quintic threefold. This is the old favorite ``conifold'' transition studied originally in \cite{GH:con} and also viewed in same context as we are seeing it in \cite{GMV:Ht}. The \fluffy\ objects we see are exactly consistent with this picture.

It is amusing to note that the exoflop picture of a conifold transition has the ``conifold'' in the exoflop limit, while the wall between the phases is associated with the spherical functor from $\DC(\P^2)$. The latter is a ``large-radius limit'' as discussed in example \ref{eg:divisor} in section \ref{s:sphF}. So a ``large-radius limit'' actually lives in the wall between the phases! This latter large radius \CY\ is, of course, a flop of the \CY\ phase $\Sigma_+$.


\section{Discussion} \label{s:disc}

Clearly it would be nice to elucidate the precise relationship between \fluffy ness and actual masslessness. In the Seidel--Thomas twist on $\O_E$ we expect, from physics, to have precisely one massless object. The object $\O_E$ itself is \fluffy\ but it would be nice to prove this is the only \fluffy\ object.
This could then be extended to the cases of $\P^1\times\P^1$ and $dP_{16}$ in the text where we had a finite number of massless objects.

In the cases where we have an infinite number of \fluffy\ objects we are probably over-counting the massless states. For example, if $\a$ is an object in $\A$ and $F:\A\to X$ is the inclusion, then $\Ext^3_X(F\a,
F\a)$ is nonzero. The mapping cone of this morphism is then \fluffy\ but unstable. It is therefore necessary to restrict attention to objects purely in the image of $F$. This might suggest the following
\begin{question}
If $\a \in \A$ is a simple object which is \fluffy\ at $P\in\cMcp$, is $F\a$ always then a stable D-brane with $\lim \cz=0$ at this point?
\end{question}

It would also be interesting to further understand the connection between stability and \fluffy ness. As discussed in section \ref{ss:stab}, the example of $\P^2$ shows that an object may be shown to be unstable by a monodromy argument. However, we have no general statements.

Another obvious direction to pursue is an extension of our analysis to the cases $dP_1$ to $dP_4$.  For $dP_1$ to $dP_3$, the del Pezzo surfaces are themselves toric but we would have to embed in \CY\ threefolds of Picard rank $\geq3$, which are a little harder to work with than the Picard rank 2 ones studied here.  For $dP_4$ the \CY\ threefold might have to be a complete intersection in a Grassmannian, for which the mirror is not well-understood.

More broadly, our analysis of $\DC(X)$ in an exoflop limit may allow one to ``follow'' the derived category through an extremal transition in the context of toric geometry. Since a huge number of \CY\ threefolds are connected via such transitions, this may provide more insight into the structure of the derived category for \CY\ threefolds.

\section*{Acknowledgments}

We thank R.~Plesser for many helpful discussions. PSA is supported by NSF grants DMS--0905923 and DMS--1207708. Any opinions, findings, and
conclusions or recommendations expressed in this material are those of
the authors and do not necessarily reflect the views of the National
Science Foundation.


\end{document}

Local Variables:
eval: (visual-line-mode 1)
eval: (auto-fill-mode -1)
End: